\newtheorem{theorem}{Theorem}
\newtheorem{lemma}[theorem]{Lemma}
\newtheorem{corollary}[theorem]{Corollary}
\newtheorem{remark}{Remark}
\definecolor{darkgreen}{RGB}{0,100,0}
\definecolor{firebrick}{RGB}{178,34,34}
\newcommand{\E}{{\mathbb{E}}}
\newcommand{\pth}[1]{\ensuremath{\left(#1\right)}}
\newcommand{\e}{{\varepsilon}}
\newcommand{\betat}{\beta_t}
\newcommand{\gammat}{\gamma_t}
\newcommand{\cl}{{\rm cl}}
\newcommand{\abs}[1]{|{#1}|}
\newcommand{\F}{\mathcal{F}}
\newcommand{\A}{\mathcal{A}}
\newcommand{\n}{\mathcal{N}}
\newcommand{\ei}{e_i, \cdots , e_1}
\newcommand{\m}{e_{i+1}, \cdots, e_m}
\newcommand*\samethanks[1][\value{footnote}]{\footnotemark[#1]}
\title{Strong Collapse of Random Simplicial Complexes}
\date{}
\author[1]{Jean-Daniel Boissonnat
\thanks{
J.-D. Boissonnat and S. Pritam received funding from the
European Research Council (ERC) under the European Union’s Seventh Framework Programme (FP/2007-
2013) / ERC Grant Agreement No. 339025 GUDHI (Algorithmic Foundations of Geometry Understanding in
Higher Dimensions).}
}
\author[2]{Kunal Dutta
\thanks{K. Dutta and S. Dutta received funding from the Polish NCN SONATA Grant no. 2019/35/D/ST6/04525.}
}
\author[3]{Soumik Dutta
\samethanks[2]
}
\author[4]{Siddharth Pritam
\samethanks[1]
}
\affil[1]{Universit\'e C\^ote d'Azur, INRIA, Sophia Antipolis, France. email: Jean-Daniel.Boissonnat@inria.fr}
\affil[2]{Faculty of Mathematics, Informatics and Mechanics, University of Warsaw, Poland. email: K.dutta@mimuw.edu.pl}
\affil[3]{Faculty of Mathematics, Informatics, and Mechanics, University of Warsaw, Poland. email: s.dutta2@mimuw.edu.pl}
\affil[4]{Universit\'e C\^ote d'Azur, INRIA, Sophia Antipolis, France. email: siddharth.pritam@inria.fr}
\begin{document}

\maketitle

\begin{abstract}
    The \emph{strong collapse} of a simplicial complex, proposed by Barmak and Minian~\cite{StrongHomotopy}, is a combinatorial collapse of a complex onto its sub-complex. 
        Recently, it has received attention from computational topology researchers~\cite{StrongCollPerst,FlagCompStrongColl,FlagCompEdgeColl}, owing to its empirically observed usefulness in simplification and size-reduction
        of the size of simplicial complexes while preserving the homotopy class. We consider the strong collapse process on random simplicial complexes. For the 
        Erd\H{o}s-R\'enyi random clique complex $X(n,c/n)$ on $n$ vertices with edge probability $c/n$ with $c>1$, we show that after any maximal sequence of strong collapses
        the remaining subcomplex, or \emph{core} must have $(1-\gamma)(1-c\gamma) n+o(n)$ vertices asymptotically almost surely (a.a.s.), where $\gamma$ is the least non-negative fixed 
        point of the function $f(x) = \exp\pth{-c(1-x)}$ in the range $(0,1)$. 
        These are the first theoretical results proved for strong collapses on random (or non-random) simplicial complexes. \\

\end{abstract}

\section{Introduction}
\label{sec:intro}
	
        \subparagraph{Motivation}
Simple collapse is a combinatorial notion which simplifies a simplicial complex without changing its topology. It can be expressed as a series of elementary moves of removals of pair of simplices $\sigma$ and $\tau$, such that $\sigma$ is uniquely contained in $\tau$. 
The notion of simple collapse was introduced by J.H.C Whitehead ~\cite{Whitehead} to study homotopy types of cell complexes. Since then it has found usage in many different areas of topology, especially in computational topology. Recently new variants of simple collapses have been introduced, called strong collapses and more generally $d$-collapses~\cite{StrongHomotopy, FlagCompEdgeColl, AttaliLieutierExtendedCol}. In such collapses one removes special vertices (more generally $d$-simplices) called dominated vertices (simplices) whose link is a \textit{simplicial cone}.  It's again expressed as a series of elementary moves of removals of dominated vertices (simplices). They have been shown to be very powerful tools to solve many problems in computational topology. In particular, the recent works of Pritam et. al.~\cite{StrongCollapseSid, FlagCompEdgeColl, AngelKerberPritam2023} has shown that strong collapses and edge collapses ($d$-collapse for $d=1$) can be used for efficient computation of one parameter and multi-parameter persistence. Efficient computation of persistent homology is one of the central topic of research in topological data analysis. 

The computation of persistent homology involves computing homology groups of a nested sequence of simplicial complexes called \textit{filtrations}. And to compute persistent homology requires $\mathbb{O}(n^3)$ time and $\mathbb{O}(n^2)$ space, here $n$ is the total number of simplices in the filtration. The general technique developed in~\cite{StrongCollapseSid,FlagCompStrongColl, FlagCompEdgeColl, glisseSwapShiftTrim2022, AngelKerberPritam2023} is to reduce a filtration to a smaller filtration using strong or edge collapse such that the persistent homology is preserved. In~\cite{StrongCollapseSid,FlagCompStrongColl, FlagCompEdgeColl, glisseSwapShiftTrim2022, AngelKerberPritam2023}, it has been established through experiments that in practice the reduced filtrations are very small and thereafter computation of persistent homology is extremely fast. The gain in efficiency is quite dramatic in the case of \textit{flag (clique) complexes}, where the strong collapse and edge collapse can be computed using only the graph ($1$-skeleton) of the given complex \cite{FlagCompStrongColl, glisseSwapShiftTrim2022, AngelKerberPritam2023}. 

As mentioned above the efficiency reported in~\cite{StrongCollapseSid, FlagCompStrongColl, glisseSwapShiftTrim2022, AngelKerberPritam2023} are through experiments and there is no theoretical guarantee over the reduction size. This is due to the fact that in general the amount of reduction depends on the individual complex and its combinatorial structure. In fact, the reduction is dependent even on the order of the collapses and a different order can result in a different core, except in the case of strong collapse. Which is even harder when we want study the reduction size in a filtered simplicial complexes. This motivates us to consider the case of random simplicial complexes and study the average reduction size by collapses.

In this article, we study the problem of reduction size achieved by the strong collapses of a clique complex defined over an Erd\H{o}s-R\'enyi random graph. 

\paragraph*{Previous}
The study of random simplicial complexes was initiated in the seminal paper of Linial and Meshulam~\cite{DBLP:journals/combinatorica/LinialM06}. 
Later Meshulam and Wallach~\cite{DBLP:journals/rsa/MeshulamW09} generalized the model of random complexes to obtain the \emph{Linial-Meshulam} (LM) model of $d$-dimensional random complexes. 
Since then a large body of work from several authors has emerged 
on many different models of random simplicial complexes, studying various topological and geometric properties of such complexes~\cite{kahle2016random}.
The study of simple collapses for random simplicial complexes has also been of interest to researchers and there have been numerous works in this direction.
In the $d$-dimensional LM model, Kozlov~\cite{Kozlov} proved bounds on the threshold for vanishing of the $d$-th homology. Simple 
collapses on random complexes were first studied by Aronshtam, Linial, \L{}uczak and Meshulam~\cite{DBLP:journals/dcg/AronshtamLLM13}, who 
improved Kozlov's bound to get a tight bound on the threshold and also gave a bound on the threshold for collapsibility in the $d$-dimensional LM model. 
Later Aronshtam and Linial~\cite{DBLP:journals/rsa/AronshtamL15,DBLP:journals/rsa/AronshtamL16} extended this line of 
work, obtaining first the threshold for the vanishing of the $d$-th homology in~\cite{DBLP:journals/rsa/AronshtamL15} and then the threshold for 
non-collapsibility of the $d$-dimensional LM complex~\cite{DBLP:journals/rsa/AronshtamL16}. In \cite{linial2017random} Linial and Peled obtained precise asymptotic bounds 
on the size of the core of such complexes. Very recently, Malen~\cite{MALEN2023113267} has shown that the ER clique 
complex $X(n,p)$ is $(k+1)$-collapsible with high probability for $p=n^{-\alpha}$ when $\alpha> 1/(k+1)$. 

Thus, to the best of our knowledge, work on collapses in random complexes has so far considered only simple collapses.



\paragraph*{} Throughout this paper, we shall use the notation asymptotically almost surely (a.a.s.) for a series of events $(E_n)_{n\geq 1}$, when the probability of occurence of $E_n$ goes to 
$1$ as $n\to\infty$.
\paragraph*{Models of Random Simplicial Complexes}
In this paper we shall consider two models of random simplicial complexes, which are described below. 
For a graph $G$, let $\cl(G)$ denotes the clique (flag) complex on $G$, i.e. the simplicial complex where each complete subgraph of $G$ on $d$ vertices is a $d-1$-simplex in $\cl(G)$. 
The \emph{Erd\H{o}s-R\'enyi} (ER) model $X(n,p)$ on $n$ vertices with probability parameter $p\in [0,1]$ is 
given by connecting each possible pair of elements of an $n$-element set by an edge randomly and independently with probability $p$ to get the random graph $G=G(n,p)$. 
Let $X(n,p):=\cl(G(n,p))$.

\paragraph*{Our Contribution}
We give bounds on the size of the \textit{core} (i.e. the smallest sub complex without any dominated vertex) of a random simplicial complex after strong (vertex) collapses.
Whereas previous works focused on computing the threshold for the appearance and disappearance of the $k$-th homology class 
for different $k$, here we are more interested in computing the size of the core. We show that for $n$-vertex ER clique complexes, 
the size of the core after a maximal series of strong collapses is a.a.s. a constant fraction of $n$, with the 
constant depending only on the edge probability, and bounded away from $1$. Further, we also find a precise expression for the constant, 
as a fixed point of an implicit equation. Our first theorem is stated below.

Let $K$ be a simplicial complex. For $i \geq 1$, we use $f_{i}(K)$ to denote the set of $i-$simplices of the complex. By a slight abuse of notation, we let $f_0(K)$ denote the set of \emph{non-isolated} vertices of the 
complex and $\tilde{f_0}(K)$ denote the set of non-isolated vertices of the complex. (A vertex $v\in K$ is \emph{isolated} the if $v$ is not a face of any edge in $K$.) In a \emph{pruning phase} run on $K$, all dominated (strong-collapsible) vertices of $K$ are 
simultaneously collapsed. Let $R_{t}(K)$ denote a complex obtained by running $t$ pruning phases over $K$. Lastly, $R_{\infty}(K)$ denotes a complex obtained from $K$ 
after running a maximal series of strong collapses on $K$. i.e. the core of $K$.

\begin{theorem}
\label{sizeofcore}
Let $c>1$ and $X \sim X(n,c/n)$. Then there exists a constant $\gamma \equiv \gamma(c)$ given by the least non-negative fixed point of the function $f(x)=\exp\pth{-c(1-x)}$,  
$x\in (0,1)$, such that $0< \gamma < 1/c <1 $ and a.a.s. the following holds 
\[\abs{f_0(R_\infty(X))}=(1-\gamma)(1-c\gamma)n + o(n).\]
\end{theorem}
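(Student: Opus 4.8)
The plan is to translate the strong collapse process on the clique complex $X(n,c/n)$ into a purely graph-theoretic peeling process on the underlying graph $G=G(n,c/n)$, identify the surviving non-isolated vertices with the $2$-core of $G$, and then compute the size of the $2$-core from the local Galton–Watson structure of the sparse random graph. First I would recall the combinatorial characterisation of domination in a flag complex: a vertex $v$ is dominated (strong-collapsible) if and only if there is a vertex $u\neq v$ with $N[v]\subseteq N[u]$, where $N[\cdot]$ is the closed neighbourhood in the $1$-skeleton. Two elementary consequences drive the argument: (i) every degree-$1$ vertex (leaf) $v$ with neighbour $u$ satisfies $N[v]=\{v,u\}\subseteq N[u]$, so leaves are always dominated; and (ii) if a vertex $v$ of degree $\geq 2$ is dominated by $u$, then $u\sim v$ and every other neighbour of $v$ is adjacent to $u$, so $v$ lies in a triangle of the current graph. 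Since deleting vertices never creates triangles, throughout the whole process the only dominated vertices of degree $\geq 2$ lie among the triangles already present in $G$. In $G(n,c/n)$ the expected number of triangles is $\binom{n}{3}(c/n)^3\to c^3/6$, a constant, so a.a.s.\ there are $o(n)$ triangles and hence at most $o(n)$ collapses ever remove a vertex of degree $\geq 2$. Thus, up to an $o(n)$ additive error, the strong collapse process coincides with iterated leaf removal; and because the strong-collapse core is unique up to isomorphism (Barmak–Minian), I may compute it using the convenient leaf-removal order (the parallel pruning of mutually dominating pairs, e.g.\ an isolated edge, is harmless since such components collapse to isolated points and do not affect $f_0$).

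Next I would identify the output of leaf removal. Iterated removal of degree-$1$ vertices terminates on the $2$-core of $G$ (the maximal subgraph of minimum degree $\geq 2$) together with some isolated vertices; every tree component and every tree hanging off the core collapses away and contributes nothing to the set of non-isolated vertices. Moreover a triangle-free graph of minimum degree $\geq 2$ contains no dominated vertex, by consequence (ii) above, so the $2$-core minus its $O(1)$ triangles is already a core. Therefore $\abs{f_0(R_\infty(X))}$ equals the number of vertices of the $2$-core of $G(n,c/n)$, up to an $o(n)$ error.

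It then remains to compute the $2$-core size, for which I would use the local weak limit of $G(n,c/n)$, namely the Galton–Watson tree with $\mathrm{Poisson}(c)$ offspring, and analyse leaf removal on it. Let $\gamma$ be the probability that a child edge eventually peels, i.e.\ that the subtree below a child is entirely removed. An edge to a child $w$ peels if and only if all of $w$'s own child edges peel, and since $w$ has $\mathrm{Poisson}(c)$ children peeling independently, $\gamma=\E[\gamma^{D}]=e^{-c(1-\gamma)}$ with $D\sim\mathrm{Poisson}(c)$; hence $\gamma$ is exactly the least fixed point of $f$ in $(0,1)$, and the standard analysis of this equation yields $0<\gamma<1/c<1$ for $c>1$. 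The root lies in the $2$-core if and only if at least two of its edges survive; the number of surviving edges is $\mathrm{Poisson}(c(1-\gamma))$ by thinning, so using $e^{-c(1-\gamma)}=\gamma$,
\[
\Prob{\text{the root lies in the 2-core}} = 1-\gamma-c(1-\gamma)\gamma = (1-\gamma)(1-c\gamma),
\]
which is precisely the claimed density.

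Finally I would upgrade this to the a.a.s.\ statement about $\abs{f_0(R_\infty(X))}/n$. The fraction of vertices removed within the first $t$ pruning phases $R_t$ is a radius-$t$ local functional and converges in probability to its Galton–Watson analogue by Benjamini–Schramm local convergence; letting $t\to\infty$ and controlling the (monotone, vanishing) mass of vertices peeled only after round $t$ sandwiches the $2$-core density at $(1-\gamma)(1-c\gamma)$. Concentration around the mean can alternatively be imported from the classical $2$-core results of Pittel–Spencer–Wormald, or obtained by a bounded-differences or Wormald differential-equation analysis of the degree profile across successive pruning phases. I expect this last step to be the main obstacle: establishing genuine a.a.s.\ concentration rather than mere convergence of the expectation, and in particular controlling the \emph{late-peeled} vertices so that the finite-round local computation passes cleanly to the limit $t\to\infty$. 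By contrast, the triangle bookkeeping that reduces strong collapse to leaf removal, though the conceptual heart of the translation, should be comparatively routine.
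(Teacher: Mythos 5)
Your route is genuinely different from the paper's two-epoch analysis (tree-process computation of $\E[\abs{f_0(R_t(X))}]$, martingale concentration via critical/precritical complexes, and a random-walk argument showing the second epoch removes only $o(n)$ vertices), and your Galton--Watson computation of the $2$-core density is correct: $1-\gamma-c(1-\gamma)\gamma=(1-\gamma)(1-c\gamma)$ does match the claimed constant. But the reduction of strong collapse to leaf removal --- the step you call ``comparatively routine'' --- contains a genuine gap. From ``a.a.s.\ only $O(\log n)$ vertices of $G(n,c/n)$ lie in triangles'' you conclude that the collapse process coincides with iterated leaf removal up to an $o(n)$ additive error, and in particular that ``the $2$-core minus its $O(1)$ triangles is already a core.'' Neither implication holds. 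Collapsing a dominated vertex of degree $\geq 2$ creates new leaves \emph{inside} the $2$-core, and the resulting leaf-removal cascade can delete $2$-core vertices far beyond the triangles; the graph obtained by deleting the triangle vertices from the $2$-core need not have minimum degree $\geq 2$, so your consequence (ii) no longer applies to it and it need not be a core. Concretely, take two triangles joined by a path of length $m$: this graph is its own $2$-core, yet its strong-collapse core is a single point, because collapsing the first triangle turns the entire path into a chain of leaves that unravels all the way to the second triangle. Thus the discrepancy between the core and the $2$-core is controlled not by the \emph{number} of degree-$\geq 2$ collapses but by the total size of the cascades they trigger, which in an adversarial graph is $\Theta(n)$ even with $O(1)$ triangles.

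To close the gap you need a robustness statement specific to the random graph: deleting the $O(\log n)$ triangle vertices from the $2$-core of $G(n,c/n)$ and re-running leaf removal deletes only $o(n)$ further vertices. This is plausible --- the $2$-core is a kernel of degree-$\geq 3$ vertices joined by degree-two paths of a.a.s.\ length $O(\log n)$, and a cascade typically stops when it reaches a kernel vertex, whose degree drops by one but stays $\geq 2$ --- but it requires proof (including control of interactions when several cascades hit the same kernel vertex), and it is exactly the kind of non-local effect of a single deletion that the paper identifies as its central difficulty and handles with the critical-complex machinery of Section~\ref{sec:concentration} and the drift argument of Section~\ref{sec:second-epoch}. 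By contrast, the step you flag as the main obstacle --- concentration and the passage from finite-round local computations to $t\to\infty$ --- is the part that genuinely can be imported from the classical $2$-core literature once the reduction is established. So the proposal inverts the difficulty: the concentration step is importable, while the ``triangle bookkeeping'' is the step that, as written, is wrong and needs a real argument.
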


\begin{figure}[htb] 
                    \begin{center}
                     \includegraphics[width=100mm,height=70mm]{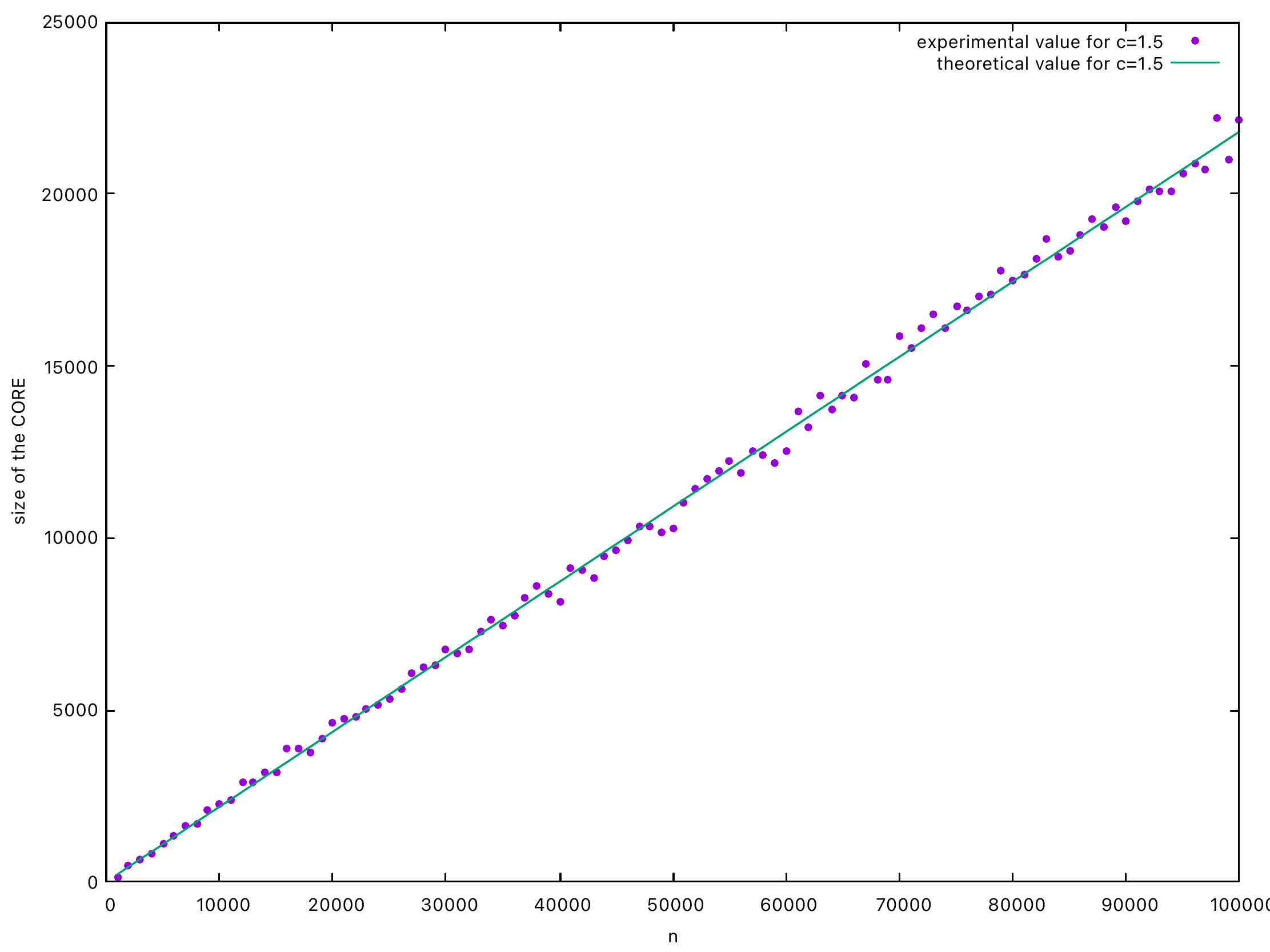}
			    \caption{We ran experiments on ER complexes and plotted the size of the core (strong collapse) with $n\in [0,10^5]$ and $c=1.5$. These experiments clearly validates our theoretical results.}
                            \label{fig:gr1-core-vs-n}
                     \end{center}
\end{figure}
 
\newpage 
 
\begin{figure}[htb] 
                    \begin{center}
                     \includegraphics[width=100mm,height=70mm]{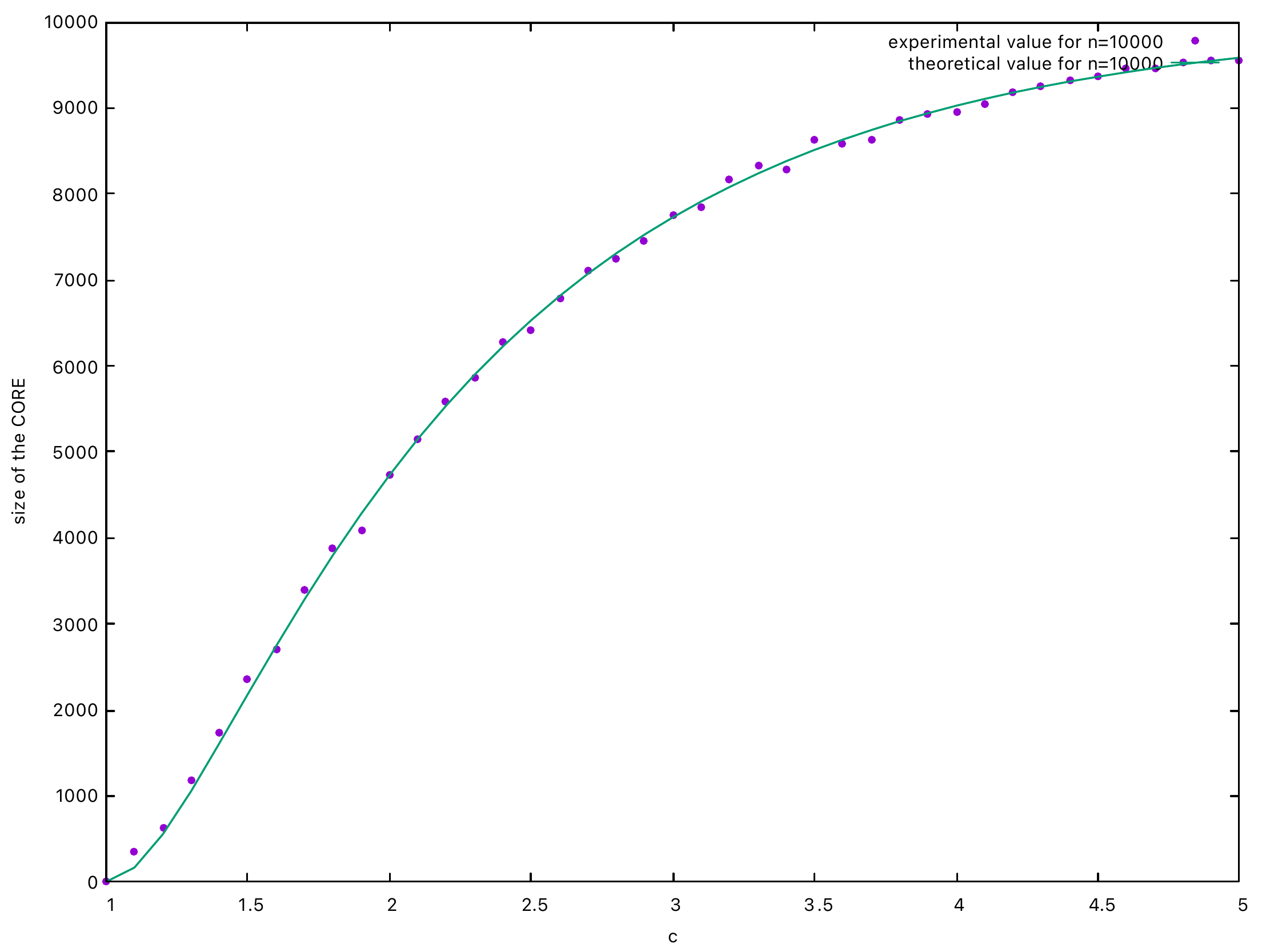}
			    \caption{In a different set of experiments over ER complexes, we varied the constant $c\in [1,5]$ keeping the number of vertices fixes to $n=10^{4}$.}
                            \label{fig:gr2-core-vs-c}
                     \end{center}
\end{figure}
 
We then address a question of algorithmic interest: given an $\e\in (0,1)$, how many rounds do we need in the first 
epoch to get within an $\e n$ gap from the actual size of the core? The following theorem gives a bound on the number of rounds $t$ as a function of $\e$.

\begin{theorem}
\label{thm:rate-conv-main}
Let $X \sim X(n,c/n)$ and $\e>0$ (sufficiently small) be given. Then there exists $t\in \mathbb{Z}_+$ such that 
\[\frac{(c-c\gamma)e^c}{1-ce^{-c}}(ce^{-c})^t \leq \epsilon \leq \frac{(c+1)e^c}{1-c\gamma}(c\gamma)^t,\]
and $\abs{f_0(R_t(X))}-\abs{f_0(R_\infty(X))} \leq \epsilon n + o(n)$ a.a.s. 
\end{theorem}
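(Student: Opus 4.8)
\emph{Proof strategy.} The plan is to reduce the statement to a purely analytic estimate on the deterministic fixed-point iteration $\gamma_t = f(\gamma_{t-1})$ that drives the pruning, and then to invert that estimate. Building on the analysis underlying Theorem~\ref{sizeofcore}, I would first record that a single pruning phase on $X(n,c/n)$ agrees, outside an $o(n)$ set of vertices lying on short cycles (of which $G(n,c/n)$ carries only $O(1)$ in expectation), with one synchronous round of leaf-removal on $G(n,c/n)$: on locally tree-like neighbourhoods a vertex is dominated exactly when it is a leaf. Passing to the local weak limit, the Poisson Galton--Watson tree with mean $c$, the probability $\gamma_t$ that a child branch is eliminated within $t$ rounds satisfies $\gamma_0=0$ and $\gamma_t=f(\gamma_{t-1})=\exp\pth{-c(1-\gamma_{t-1})}$, increasing to the least fixed point $\gamma$. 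A root survives $t$ rounds iff at least two of its branches are still alive after $t-1$ rounds, so the non-isolated survival fraction is $\phi_t:=1-\gamma_t-c\gamma_t(1-\gamma_{t-1})$, with $\phi_\infty=(1-\gamma)(1-c\gamma)$. A local-limit computation at fixed finite depth $t$, together with a second-moment bound exactly as in the proof of Theorem~\ref{sizeofcore}, then yields $\abs{f_0(R_t(X))}=\phi_t n+o(n)$ a.a.s.; it thus remains to control the deterministic quantity $\phi_t-\phi_\infty\ge 0$.

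Next I would establish two-sided geometric bounds on this gap. Since $f'(x)=cf(x)$ is increasing with $f'(0)=ce^{-c}$ and $f'(\gamma)=c\gamma$, and since $\gamma<1/c$ forces $c\gamma<1$, every slope met along the orbit lies in $[ce^{-c},c\gamma]$. Writing $\gamma-\gamma_t=\sum_{s>t}(\gamma_s-\gamma_{s-1})$ and comparing consecutive increments through the mean value theorem (each ratio $(\gamma_s-\gamma_{s-1})/(\gamma_{s-1}-\gamma_{s-2})$ lies in $[ce^{-c},c\gamma]$), summation of the resulting geometric series produces the denominators $1-ce^{-c}$ and $1-c\gamma$. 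Propagating these estimates through the algebraic identity $\phi_t-\phi_\infty=(1+c)(\gamma-\gamma_t)-c(\gamma^2-\gamma_t\gamma_{t-1})$ and using $e^{-c}\le\gamma\le 1$ to clear the remaining constants, I expect to obtain
\[
\frac{(c-c\gamma)e^c}{1-ce^{-c}}\,(ce^{-c})^t \;\le\; \phi_t-\phi_\infty \;\le\; \frac{(c+1)e^c}{1-c\gamma}\,(c\gamma)^t .
\]

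To finish, I would invert this estimate. Given $\e$ (sufficiently small), let $t$ be the least integer with $\phi_t-\phi_\infty\le\e$; such $t$ exists since $\phi_t\downarrow\phi_\infty$. The lower bound above combined with $\phi_t-\phi_\infty\le\e$ gives $\frac{(c-c\gamma)e^c}{1-ce^{-c}}(ce^{-c})^t\le\e$, which is the left inequality of the claimed sandwich. Minimality gives $\phi_{t-1}-\phi_\infty>\e$; feeding this into the raw product bound $\phi_{t-1}-\phi_\infty\le(1+c)\gamma(c\gamma)^{t-1}$ and absorbing the extra factor of $c\gamma$ via the elementary inequality $1-c\gamma\le ce^c$ yields $\e\le\frac{(c+1)e^c}{1-c\gamma}(c\gamma)^t$, the right inequality. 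Combining $\phi_t-\phi_\infty\le\e$ with the concentration of the first step gives $\abs{f_0(R_t(X))}-\abs{f_0(R_\infty(X))}=(\phi_t-\phi_\infty)n+o(n)\le\e n+o(n)$ a.a.s., as required.

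The step I expect to be the main obstacle is the lower bound in the displayed sandwich: because $\phi_t-\phi_\infty$ carries the negative cross term $-c(\gamma^2-\gamma_t\gamma_{t-1})$, one must show it cannot cancel the leading positive part, i.e. that the relevant slope combination stays bounded away from $0$, before the geometric summation can deliver the precise constant $\frac{(c-c\gamma)e^c}{1-ce^{-c}}$. A secondary difficulty is making the reduction to leaf-removal quantitative: one must verify that the vertices removed by strong collapse but not by leaf-removal---those created by the $O(1)$ short cycles---contribute only $o(n)$ to $\abs{f_0(R_t)}$ uniformly in the fixed finite $t$, so that the finite-depth concentration inherited from Theorem~\ref{sizeofcore} is not disturbed.
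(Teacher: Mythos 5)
Your proposal takes essentially the same route as the paper's own proof: Section~\ref{sec:rate-conv} establishes precisely your mean-value-theorem increment bounds $c\gamma_t \leq (\gamma_{t+1}-\gamma_t)/(\gamma_t-\gamma_{t-1}) \leq c\gamma_{t+1}$, expands the deterministic gap as $\epsilon(t)=(\gamma-\gamma_{t+1})+(c-c(\gamma_t+\gamma))(\gamma-\gamma_t)$, and sums the geometric series to obtain the same two-sided sandwich, with the probabilistic reduction to this deterministic quantity supplied by Theorems~\ref{expectation}, \ref{maintheorem} and~\ref{sizeofcore} exactly as in your first paragraph. If anything, your inversion step (taking the least admissible $t$, and absorbing the stray factor of $c\gamma$ via $1-c\gamma\leq ce^{c}$) is spelled out in more detail than in the paper, which deduces the theorem directly from its corollary on $\gamma-\gamma_t$; the difficulty you flag about the negative cross term is handled there implicitly by the same slope estimates you propose.
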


\subsection{Overview of Proofs and Outline of Sections}

While our analysis shares the general flow of the analysis of the simple collapsibility of LM complexes 
in e.g.~\cite{DBLP:journals/dcg/AronshtamLLM13,DBLP:journals/rsa/AronshtamL15,DBLP:journals/rsa/AronshtamL16}, 
there are several differences and difficulties. Firstly, in both cases (strong collapse of ER clique complex and simple collapse of LM model) our goal is to find the size 
of the core, rather than whether the complex is collapsible or not. Secondly, in the case of the random ER clique complex our analysis needs to take into account 
the non-homogeneity of the complex. That is, maximal simplices in this model can have different 
sizes. Further, unlike in the LM model, the existence of a maximal simplex is not independent of the existence of 
all other possible maximal simplices. Finally perhaps the most interesting difference of the random ER clique 
complex model in our context, is that the effect of removing a vertex is not necessarily localized -- a fact which 
requires a fair bit of innovation to handle (in several places), especially in proving the concentration bounds in Section~\ref{sec:concentration}, 
and in the later stages of the analysis, in Section~\ref{sec:second-epoch}. We present a more detailed overview of the proof strategy used in our concentration bound, in the 
beginning of Section~\ref{sec:concentration}.   \\ 

With the above caveats in mind, we first briefly review the main ideas of the proof of Aronshtam and Linial~\cite{DBLP:journals/rsa/AronshtamL16}. 
The analysis was split into two epochs, each of which were further divided into several rounds (phases). In the first epoch,
in each round, every simple-collapsible simplex was simultaneously collapsed, and this procedure was repeated for a constant
number of rounds. Aronshtam and Linial~\cite{DBLP:journals/rsa/AronshtamL16} used a tree-like model of a random simplicial complex to approximate the local structure of the 
random LM complex and showed that the total number of collapsed simplices over all such rounds tended to a constant fraction of the number of initial simplices, as the number of rounds increased. Moreover, this limit constant could be expressed as a fixed point 
of an implicit equation involving only the distribution parameter $c$. 

In the analysis of the second epoch, a simplex 
would be chosen randomly from the set of (non-neighbouring) simple-collapsible simplices, and collapsed in each round. 
The aim was to show that in this epoch, the number of simplices collapsed would be asymptotically negligible compared 
to the inital number of simplices present. Thus in summary, the final number of deleted simplices is determined by the 
first epoch itself, and the second epoch serves to show the tightness of this bound. \\
 
Our proofs also split the analysis into two epochs. Similar to~\cite{DBLP:journals/rsa/AronshtamL16},  
we show that a certain tree-like model of random simplicial complexes provides a good approximation of local neighbourhoods, which is done in Section~\ref{sec:tree}. 
This is followed by the analysis of the first epoch, in Section~\ref{sec:first-epoch}. The main theorem of this section gives an expression for 
the expected number of vertices remaining after $t$ rounds (or \emph{pruning phases}) of the first epoch.  
Bounds on $t$ as a function of $\e$, $\gamma$ and $c$, are given in Theorem~\ref{thm:rate-conv-main}, which is proved in Section~\ref{sec:rate-conv}.   
Before beginning the analysis of the second epoch however, we need bounds on the concentration of the size of the core itself, 
as well as several other random variables. These are proved in Section~\ref{sec:concentration}, where we use the notions of \emph{critical and 
precritical (sub)complexes} -- described in more detail in the beginning of Section~\ref{sec:concentration}. With these concentration bounds 
in place, we move to the analysis of the second epoch in Section~\ref{sec:second-epoch}.

\section{Preliminaries} \label{sect:prel}


In this section we briefly introduce some topological and probabilistic notions. Readers can refer to \cite{Hatcher} for a comprehensive introduction to topics related to topology 
and~\cite{Bollobas2001} for topics related to probability theory and random structures.

\subparagraph{Simplicial complex.} An \textbf{abstract simplicial complex} $\textit{K}$ is a collection of subsets of a non-empty finite set $\textit{X},$ such that for every subset $\textit{A}$ in $\textit{K}$, all the subsets of $\textit{A}$ are in $\textit{K}$. 
An element of $\textit{K}$ is called a \textbf{simplex}. An element of cardinality $d+1$ is called a $d$-simplex and 
$d$ is called its \textbf{dimension}. Given a simplicial complex $K$, we denote its geometric realization as $|K|$.
A simplex is called \textbf{maximal} if it is not a proper subset of any other simplex in $\textit{K}$. A sub-collection  $\textit{L}$ of $\textit{K}$ is called a \textbf{subcomplex} if it is a simplicial complex itself. A subcomplex $K'$ of $K$ is called a \textbf{ $d$-skeleton} of $K$ if it contains all the simplices of $K$ of dimension at most $d$.	


\subparagraph{Erdos Renyi Graph Definition. } This is the probability space $G(n,p)$ consisting of all the graphs on $n$ vertices. 
Probability of occurrence of a graph with $m$ edges is $p^m(1-p)^{(n)(n-1)/2-m}$. In other words, it is a random graph on $n$ vertices where each edge can occur independently with probability $p$.

\subparagraph{Clique complex and Neighborhood. } A complex $K$ is a \textbf{clique} or a \textbf{flag} complex if, when a subset of its vertices form a clique (i.e. any pair of vertices is joined by an edge), they span a simplex. 
For a vertex $v$ in $G$, the \textbf{open neighborhood} $N_G(v)$ of $v$ in $G$ is defined as  $N_G(v) := \{u \in G \: | \; [uv] \in E\}$, here $E$ is the set of edges of $G$. The \textbf{closed neighborhood} $N_G[v]$ is $N_G[v] := N_G(v) \cup \{ v\}$. 
Similarly we define the closed and open neighborhood of an edge $[xy] \in G$, $N_G[xy]$ and $N_G(xy)$ as $N_G[xy] := N[x] \cap N[y]$ and $N_G(xy) := N(x) \cap N(y)$, respectively. The above definitions can be extended to any $k$-clique 
$\sigma = [v_1, v_2, ..., v_k]$ of $G$; $N_G[\sigma] :=  \bigcap_{v_i \in \sigma}  N[v_i]$ and $N_G(\sigma) :=  \bigcap_{v_i \in \sigma}  N(v_i)$.

\subparagraph{Star, Link and Simplicial Cone. }
Let $\sigma$ be a simplex of a simplicial complex $K$, the \textbf{closed star} of $\sigma$ in $K$, $st_K(\sigma)$ is a subcomplex of $K$ which is defined as follows,
$ st_K(\sigma) := \{ \tau \in K | \hspace{5px} \tau \cup \sigma \in K \}.$
The \textbf{link} of $\sigma$ in $K$, $lk_K(\sigma)$ is defined as the set of simplices in $st_K(\sigma)$ which do not intersect with $\sigma$,
$ lk_K(\sigma) := \{ \tau \in st_K(\sigma) | \tau \cap \sigma = \emptyset \}.$
The \textbf{open star} of $\sigma$ in $K$, $st_K^o(\sigma)$ is defined as the set $st_K(\sigma) \setminus lk_K(\sigma)$. Usually $st_K^o(\sigma)$  is not a subcomplex of $K$. 

Let $L$ be a simplicial complex and let $a$ be a vertex not in $L$. Then  the set $aL$  defined as
$aL := \{ a,\hspace{2px} \tau \hspace{2px}|\hspace{5px} \tau \in L \hspace{5px} or \hspace{5px} \tau = \sigma \cup a; \hspace{5px} {\rm where} \hspace{5px} \sigma \in L \}$ 
is called a \textbf{simplicial cone}.

\subparagraph{Simple collapse.} Given a complex $K$, a simplex $\sigma \in K$ is called a \textbf{free simplex} if $\sigma$ has a unique coface $\tau \in K$. The pair $\{\sigma, \tau\}$ is called a \textbf{free pair}. The action of removing a free pair: $K\rightarrow K \setminus \{\sigma, \tau\}$ is called an \textbf{elementary simple collapse.} A series of such elementary simple collapses is called a \textbf{simple collapse}. We denote it as $K$ ${\searrow}$ $L$. 
A subcomplex $K^{ec}$ of $K$ is called an \textbf{elementary core} of $K$ if $K {\searrow} K^{ec}$ and $K^{ec}$ has no free pair. 

\subparagraph{Removal of a simplex.} We denote by \textbf{$K \setminus \sigma$} the subcomplex of $K$ obtained by removing $\sigma$, i.e. the complex that has all the simplices of $K$ except the simplex $\sigma$ and the cofaces of $\sigma$.  


\subparagraph{Dominated simplex.} 
A simplex $\sigma$ in $K$ is called a \textbf{dominated simplex} if the link  $lk_K(\sigma)$ of $\sigma$ in $K$ is a simplicial cone, i.e. if there exists a vertex $v^{\prime} \notin \sigma$ and a subcomplex $L$ of $K$, such that $lk_K(\sigma) = v^{\prime}L$. We say that the vertex  $v^{\prime}$ is \textit{dominating} $\sigma$ and that $\sigma$ is \textit{dominated} by $v^{\prime}$, which we denote as $\sigma \prec v^{\prime}$. 

\subparagraph{$\sigma$-algebra} The reader can refer to \cite{Bollobas2001} for the definition of $\sigma$-algebra.

\subparagraph{$d$-collapse.} 
Given a complex $K$, the action of removing a dominated $k$-simplex $\sigma$ from $K$ is called an \textbf{elementary $d$-collapse}, denoted as $K {\searrow\searrow}^d \{K \setminus \sigma\}$. A series of elementary $d$-collapses is called a \textbf{$d$-collapse}, denoted as $K$ ${\searrow\searrow}^k$ $L$. We further call a complex $K$ \textbf{$d$-collapse minimal} if it does not have any dominated $d$ simplices. A subcomplex $K^k$ of $K$ is called a \textbf{$d$-core} if $K$ ${\searrow\searrow}^k$ $K^d$ and $K^d$ is $d$-collapse minimal. A $0$-core of a complex $K$ is unique, however it is not true in general for $k \geq 1$. Like simple collapses, $d$-collapses preserve the homotopy type of a simplicial complex. 

A $0$-collapse is a \textbf{strong collapse} as introduced in \cite{StrongHomotopy} and $1$-collapse is called an \textbf{edge collapse}~\cite{FlagCompEdgeColl}. The following lemma from~\cite{FlagCompEdgeColl} characterizes the domination of a simplex in the special case of a flag complex in terms of neighborhood. 

\begin{lemma} \label{nbd_contenment}
Let  $\sigma$ be a simplex of a flag complex $K$. Then $\sigma$ will be dominated by a vertex $v^\prime$ if and only if $N_G[\sigma] \subseteq N_G[v^\prime]$. 
\end{lemma}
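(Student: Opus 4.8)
The plan is to unravel the definition of domination entirely into the language of the underlying graph $G$, via two reduction steps, and then observe that the containment $N_G[\sigma]\subseteq N_G[v']$ is exactly the resulting graph condition. Throughout I take $v'$ to be a vertex of $K$ with $v'\notin\sigma$, as demanded by the definition of $\sigma\prec v'$. The degenerate case $v'\in\sigma$ has to be set aside by hand, since then $N_G[\sigma]\subseteq N_G[v']$ holds trivially while $\sigma$ is not dominated by $v'$; so the equivalence is to be read for $v'\notin\sigma$, which is part of the domination relation anyway.

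The first reduction is to show that in a flag complex the link $lk_K(\sigma)$ is precisely the clique complex induced on the open common neighbourhood, $lk_K(\sigma)=\cl(G[N_G(\sigma)])$. Indeed, for $\tau$ with $\tau\cap\sigma=\emptyset$, flagness gives $\tau\cup\sigma\in K$ iff $\tau\cup\sigma$ is a clique, which (as $\sigma$ and $\tau$ are already cliques) happens iff every vertex of $\tau$ is adjacent to every vertex of $\sigma$, i.e. iff $\tau\subseteq N_G(\sigma)$. The second reduction is a characterisation of simplicial cones in a flag complex: $\cl(H)=v'L$ for some $L$ iff $v'$ is a universal vertex of $H$ (adjacent to all other vertices). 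The forward direction holds because each vertex $w\neq v'$ has $\{w\}\in L$, so $\{w,v'\}\in v'L$ is an edge; conversely, if $v'$ is universal one checks straight from the definition $aL=\{a,\tau\mid \tau\in L \text{ or } \tau=\sigma\cup a,\ \sigma\in L\}$ that taking $L$ to be the deletion of $v'$ recovers $\cl(H)$ (each clique extends by $v'$ since $v'$ is adjacent to all its vertices). Combining the two, $\sigma\prec v'$ holds iff $v'\in N_G(\sigma)$ and $v'$ is adjacent to every other vertex of $N_G(\sigma)$, i.e. iff $v'\in N_G(\sigma)$ and $N_G(\sigma)\subseteq N_G[v']$.

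It then remains to match this with the neighbourhood containment. Writing $N_G[\sigma]=\{u:\sigma\cup\{u\}\in K\}=\sigma\sqcup N_G(\sigma)$ (a disjoint union, as $N_G(\sigma)$ excludes $\sigma$), I would split
\[
N_G[\sigma]\subseteq N_G[v'] \iff \sigma\subseteq N_G[v'] \ \text{ and }\ N_G(\sigma)\subseteq N_G[v'].
\]
Since $v'\notin\sigma$, the first clause $\sigma\subseteq N_G[v']$ says exactly that $v'$ is adjacent to every vertex of $\sigma$, i.e. $v'\in N_G(\sigma)$, while the second clause is verbatim the condition extracted above. Hence $N_G[\sigma]\subseteq N_G[v']$ is equivalent to ``$v'\in N_G(\sigma)$ and $N_G(\sigma)\subseteq N_G[v']$'', which the two reductions identified with $\sigma\prec v'$, closing the equivalence.

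I expect the only genuine subtlety—more a point requiring care than an obstacle—to be the cone-apex characterisation in the second reduction, where one must argue from the abstract definition of $aL$ that universality of $v'$ is not merely necessary but \emph{sufficient} for the entire link to be the cone $v'L$. The remainder is bookkeeping of the flag-complex translation together with the $v'\notin\sigma$ convention.
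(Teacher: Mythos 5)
Your proof is correct. Note that the paper does not give its own proof of this lemma at all: it is imported verbatim from~\cite{FlagCompEdgeColl}, so there is no in-paper argument to diverge from. Your two reductions --- that in a flag complex $lk_K(\sigma)$ is exactly the clique complex induced on $N_G(\sigma)$, and that a clique complex equals a cone $v'L$ precisely when $v'$ is a universal vertex of the underlying graph --- are the standard route (essentially the argument in the cited reference), the splitting $N_G[\sigma]=\sigma\sqcup N_G(\sigma)$ closes the equivalence cleanly, and your explicit handling of the $v'\notin\sigma$ convention is the correct reading of the statement.
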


%

In this article, our main focus will be the case $d=0$, i.e. when $\sigma$ is a vertex. 
The next lemma from~\cite{FlagCompStrongColl}, though elementary, is of crucial significance. 
\begin{lemma} \label{flag_property}
Let $K$ be a flag complex and let $L$ be any subcomplex of $K$ obtained by strong collapses. Then $L$ is also a flag complex.
\end{lemma}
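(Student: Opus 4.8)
The plan is to reduce the claim to the case of a single elementary strong collapse and then argue by induction on the length of the collapse sequence. Since, by definition, any strong collapse $K \searrow\searrow L$ is a finite composition of elementary steps $K \searrow\searrow K\setminus v$ (each removing a dominated vertex $v$), it suffices to show that a single elementary strong collapse carries a flag complex to a flag complex; a routine induction on the number of steps then yields the lemma in full.

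For the single-step claim, let $v$ be a vertex of $K$ dominated by some $v'$, and consider the elementary strong collapse $K \searrow\searrow K\setminus v$. By the definition of removal, the simplices of $K\setminus v$ are precisely those simplices of $K$ that do not contain $v$; in particular its vertex set is $V(K)\setminus\{v\}$, and its $1$-skeleton is exactly the subgraph of $G$ induced on $V(K)\setminus\{v\}$. The heart of the argument is the observation that $K\setminus v$ is then the clique complex of this induced subgraph. To see this, I would take any set $S\subseteq V(K)\setminus\{v\}$ that forms a clique in the $1$-skeleton of $K\setminus v$. Since that $1$-skeleton is an \emph{induced} subgraph of $G$, the set $S$ is also a clique in $G$; as $K$ is flag, $S$ spans a simplex of $K$; and because $v\notin S$, this simplex is not a coface of $v$ and therefore survives the removal, so $S\in K\setminus v$. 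Hence $K\setminus v$ is flag.

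There is essentially no deep obstacle here, and it is worth noting that the domination hypothesis is not actually used in the flag-preservation argument itself: removing \emph{any} vertex (together with all its cofaces) from a flag complex leaves a flag complex, precisely because the resulting $1$-skeleton is an induced subgraph and cliques in an induced subgraph remain cliques in the ambient graph. The domination is needed only to guarantee that the removal is a legitimate strong collapse (and hence homotopy-preserving). The only point demanding care is the bookkeeping in the definition of $K\setminus v$, so that one correctly identifies the surviving $1$-skeleton as the induced subgraph rather than as something larger; once this is pinned down, the equivalence with $\cl$ of the induced subgraph and the inductive extension to arbitrary strong collapses are immediate.
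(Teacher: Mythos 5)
Your proof is correct. Note that the paper itself gives no proof of this lemma --- it is quoted from \cite{FlagCompStrongColl} --- so there is nothing internal to compare against; your argument (an elementary collapse removes a vertex together with its cofaces, leaving exactly the clique complex of the induced subgraph on the remaining vertices, followed by induction on the number of elementary steps) is the standard one, and your observation that the domination hypothesis plays no role in flag-preservation, only in guaranteeing the removal is a homotopy-preserving strong collapse, is accurate.
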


Both lemmas (Lemma~\ref{nbd_contenment} and Lemma~\ref{flag_property}) show that strong collapse is well-suited to flag complexes. In the next sections we will investigate the reduction capabilities of strong of a clique complex of an Erd\H{o}s-R\'enyi random graph.

\section{Tree process}\label{sec:tree}
In this section, we describe the tree process which is used to simulate the collapse process in the first epoch of the (strong) collapse.
A one-dimensional tree is built recursively as follows: 
\begin{enumerate} 
\item Start with a single node(root).
\item In the $n$th iteration, add children to all the leaves at distance $n-1$ from the root from Poisson distribution with parameter $c$($>1$).
\end{enumerate}
Let $\mathcal{T}_n$ denote the set of all possible trees after $n$th iteration for $n>1$ and $\mathcal{T}_0$ being the root itself. Let $\mathcal{T}:= \bigcup\limits_{n \in \mathbb{N}} \mathcal{T}_n $.

Let $\gammat$ be the probability that a tree $T \in \mathcal{T}_{t}$ is pruned to the root in \textit{no more than} $t-1$ steps. Clearly, \(\gamma_1=e^{-c}\). Set \(\gamma_0=0\). Also, we have the following recursive relation which is true in general: \[\gamma_{t+1}=e^{-c(1-\gammat)}.\]

Note that, in this process we never prune the root itself even if its degree is $1$. We call such a process \textit{root collapsing}. Let $\gamma_{t}^{k}$ denote the probability that a tree $T \in \mathcal{T}_t$ has degree $k$ after $t-1$ root collapsing steps. Then, \[\gamma_{t}^{k}=\frac{{(c(1-\gamma_{t-1}))}^k}{k!}e^{-c(1-\gamma_{t-1})}\]
Observe that $\gamma_{1}^{k}$ gives the initial degree distribution. Also, let $\gamma_{t}^{\geq 2}$ denote the probability that a vertex has degree atleast $2$ after $t-1$ root collapsing steps. Then we have  \[\gamma_{t}^{\geq 2}=\sum_{k=2}^{\infty} \gamma_{t}^{k} = 1-\gammat(1+c(1-\gamma_{t-1})).\]  

Define $\betat:=1-\gamma_{t+1}$. Thus, $\betat$ is the probability that \textit{atleast} $t+1$ root collapsing steps are needed to isolate the root of a tree $T \in \mathcal{T}_t$.  

Define $f(x):= e^{-c(1-x)}$ on the interval $[0,1]$. We shall assume $c>1$ for the rest of these paper unless specified otherwise. Note that $f([0,1]) \subseteq [0,1]$, $f^\prime(x)=cf(x)$ and $f(x)$ is  strictly increasing on the interval $[0,1]$. Let $f_t(x)$ denoted the function obtained by composing $f$ $t$ times. Then $f_t$ is also strictly increasing on $[0,1]$ for all $t\geq 1 $. As, $\gamma_1>\gamma_0$, applying $f_{t-1}$ on both sides, we get $\gammat > \gamma_{t-1}$ for all $t \geq 1$.

 Also define $\gamma$ to be the left most zero of the function $g(x):=e^{-c(1-x)}-x$ defined on the range $[0,\infty)$. Note that $0 < \gamma \leq 1$ as $g(1)=0$. 

\begin{lemma}
\label{drift1}
For $c>1$ and $\gamma=\gamma(c)$ defined as earlier we have $c\gamma<1$.
\end{lemma}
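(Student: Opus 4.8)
The plan is to exploit the characterization of $\gamma$ as the leftmost fixed point of $f$, together with a sign analysis of the derivative $g'$ at $\gamma$. Recall that $g(x) = e^{-c(1-x)} - x = f(x) - x$, so $\gamma$ satisfies $f(\gamma) = \gamma$, i.e. $g(\gamma) = 0$. Since $f'(x) = c f(x)$, we have $g'(x) = c f(x) - 1$, and in particular $g'(\gamma) = c f(\gamma) - 1 = c\gamma - 1$. Thus the claim $c\gamma < 1$ is precisely the claim that $g'(\gamma) < 0$, and the whole proof reduces to a one-sided derivative sign computation at the first zero.

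First I would establish $g'(\gamma) \le 0$ from the minimality of $\gamma$. Since $g(0) = e^{-c} > 0$ and $\gamma$ is by definition the leftmost zero of $g$ on $[0,\infty)$, continuity forces $g(x) > 0$ for every $x \in [0,\gamma)$. Then for any $x < \gamma$ the difference quotient $\frac{g(\gamma) - g(x)}{\gamma - x} = \frac{-g(x)}{\gamma - x}$ has negative numerator and positive denominator, hence is strictly negative; letting $x \to \gamma^-$ gives $g'(\gamma) \le 0$, equivalently $c\gamma \le 1$.

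It then remains to exclude the boundary case $c\gamma = 1$. If $c\gamma = 1$, then $\gamma = 1/c$, and substituting into $f(\gamma) = \gamma$ gives $e^{-c(1 - 1/c)} = 1/c$, i.e. $e^{-(c-1)} = 1/c$, i.e. $e^{c-1} = c$. I would rule this out by the elementary inequality $e^{c-1} > c$ for all $c > 1$: setting $h(c) = e^{c-1} - c$, one has $h(1) = 0$ and $h'(c) = e^{c-1} - 1 > 0$ for $c > 1$, so $h(c) > 0$ on $(1,\infty)$. Hence $c\gamma = 1$ is impossible when $c > 1$, and combined with $c\gamma \le 1$ this yields the strict bound $c\gamma < 1$.

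The only delicate point is the passage from ``$\gamma$ is the leftmost zero'' to ``$g'(\gamma) \le 0$''; everything else is a short calculus estimate. I expect no genuine obstacle here, but care is needed to phrase the one-sided limit argument correctly rather than appealing loosely to monotonicity. As a sanity check, the conclusion also forces $\gamma < 1/c < 1$, which is consistent with the range $0 < \gamma < 1/c < 1$ asserted for $\gamma$ in Theorem~\ref{sizeofcore}.
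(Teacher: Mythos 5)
Your proof is correct and follows essentially the same route as the paper: both use that $g>0$ to the left of its leftmost zero $\gamma$ to conclude $g'(\gamma)=c\gamma-1\leq 0$, and then exclude the equality case $c\gamma=1$ by noting that $ce^{1-c}=1$ (equivalently $e^{c-1}=c$) is impossible for $c>1$. Your write-up merely makes the one-sided difference-quotient argument and the elementary inequality $e^{c-1}>c$ explicit, which the paper leaves implicit.
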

\begin{proof}[Proof of Lemma~\ref{drift1}]
Let $g(x)$ be defined as above. So $g(\gamma - )>0$ and $g(\gamma)=0$. Thus, from the differentiability of $g(x)$, $c\gamma -1= g^\prime(\gamma)\leq 0$. Now, if $c\gamma-1=0$ then $ce^{1-c}=1$, which is impossible for $c>1$. Thus $c\gamma -1 < 0$.
\end{proof}

Now observe that, $x \leq \gamma \implies f(x)\leq f(\gamma)=\gamma$. Thus, by the fact that $f^\prime(x)=cf(x)$, $f(x)$ restricted on $[0,\gamma]$ becomes a  contraction mapping. So, by Banach Fixed Point theorem, $f|_{[0,\gamma]}$ has an unique fixed point which, in our case, is $\gamma$.

 To summarize the above arguments we get the following remark.

\begin{remark}

$\gamma_t$ converges to $0<\gamma <1/c$ as an increasing sequence and $\betat$ converges to $1-1/c<\beta<1$ as a decreasing sequence. 
\end{remark}

\section{First Epoch}
\label{sec:first-epoch}

In this section, we present the analysis of the first epoch of the collapse. The first epoch is executed in phases and in each phase we remove a maximal set of dominated vertices simultaneously.

Our goal, in this section, is to prove the following theorem. 
\begin{theorem}
\label{expectation}
Let $X \sim X(n,c/n)$. Let $\E(\abs{f_0(R_t(X))})$ denote the expected number of non-isolated vertices in $X$ after $t$ strong collapse phases and $\gammat$ be as defined in the last section. Then,

\[\E(\abs{f_0(R_t(X))})=(1-\gamma_{t+1}-c\gammat+c\gammat^2)n.\]
\end{theorem}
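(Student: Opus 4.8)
The plan is to reduce the statement to a single-vertex survival probability and then compute that probability exactly in the Poisson Galton--Watson tree that locally models $X(n,c/n)$. By the vertex-transitivity of the model and linearity of expectation,
\[\Ex{\abs{f_0(R_t(X))}} = n\cdot \Prob{v \text{ is non-isolated in } R_t(X)}\]
for an arbitrary fixed vertex $v$, so it suffices to show this per-vertex probability equals $1-\gamma_{t+1}-c\gamma_t+c\gamma_t^2$ (up to an $o(1)$ term, contributing $o(n)$ to the expectation). The first sub-step is to make precise that a pruning phase acts as leaf-removal on tree-like neighbourhoods. By Lemma~\ref{nbd_contenment}, a vertex $u$ of the flag complex $X$ is dominated iff $N_G[u]\subseteq N_G[u']$ for some $u'$; when the relevant neighbourhood of $v$ is a tree, the neighbours of any vertex are pairwise non-adjacent, so the only dominated vertices are the degree-$1$ leaves, while isolated vertices are never dominated. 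Hence a pruning phase removes exactly the current leaves.

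Next I would characterise survival of the root. Let $D_s$ be the degree of the root after $s$ rounds of the root-protected (``root collapsing'') leaf-removal, which is non-increasing in $s$, and for which the computation of Section~\ref{sec:tree} gives $\Prob{D_s=k}=\gamma_{s+1}^{k}$. Tracking the unprotected process, the root is removed precisely at the first round in which it becomes a leaf, so a short monotonicity argument shows that, for $t\geq 1$,
\[v \text{ is non-isolated after } t \text{ phases}\iff D_{t-1}\geq 2 \ \text{ and } \ D_t\geq 1.\]
Indeed $D_{t-1}\geq 2$ forces, by monotonicity, $D_0,\dots,D_{t-1}\geq 2$, so $v$ is never a leaf at the start of rounds $1,\dots,t$ and therefore survives, while $D_t\geq 1$ records that it still carries an edge; the remaining cases ($D_{t-1}=1$, where $v$ is pruned, and $D_{t-1}=0$, where $v$ is already isolated) contribute nothing. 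The base case $t=0$ is trivial since then the condition is just $D_0\geq 1$, giving $1-\gamma_1=1-e^{-c}$.

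For the final computation I would split the root's children into three independent Poisson classes by the round at which their subtrees are eliminated: alive after $t$ rounds (probability $1-\gamma_t$ each), eliminated exactly at round $t$ (probability $\gamma_t-\gamma_{t-1}$), and eliminated before round $t$ (probability $\gamma_{t-1}$). Writing $\#A\sim\mathrm{Poisson}(c(1-\gamma_t))$ and $\#B\sim\mathrm{Poisson}(c(\gamma_t-\gamma_{t-1}))$ independently, we have $D_t=\#A$ and $D_{t-1}=\#A+\#B$, whence
\[\Prob{D_{t-1}\geq 2,\ D_t\geq 1}=\Prob{\#A\geq 1}-\Prob{\#A=1,\ \#B=0}.\]
Here $\Prob{\#A\geq 1}=1-e^{-c(1-\gamma_t)}=1-\gamma_{t+1}$, and, using the tree recursion $e^{-c(1-\gamma_{t-1})}=\gamma_t$, the second term collapses to $c(1-\gamma_t)\,e^{-c(1-\gamma_t)}e^{-c(\gamma_t-\gamma_{t-1})}=c(1-\gamma_t)\gamma_t=c\gamma_t-c\gamma_t^2$, yielding exactly $1-\gamma_{t+1}-c\gamma_t+c\gamma_t^2$.

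The delicate part is the coupling between $X(n,c/n)$ and the Poisson branching tree. I must argue that whether $v$ is non-isolated after $t$ phases depends only on its depth-$(t+1)$ neighbourhood, since leaf-removal exposes vertices only one level deeper per round, and that this bounded neighbourhood agrees with the branching tree up to total-variation error $o(1)$, the non-tree-like event (a short cycle or extra clique near $v$) having probability $O_t(1/n)$. As the survival indicator is bounded, this transfers the tree computation to the graph with additive error $o(n)$. The non-locality of removals flagged in the introduction does not obstruct this step because $t$ is a fixed constant in the first epoch; isolating and justifying exactly this bounded-depth locality is the main point requiring care here.
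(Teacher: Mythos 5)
Your proposal is correct and takes essentially the same route as the paper: approximate the depth-$(t+1)$ neighbourhood of a fixed vertex by the Poisson branching tree of Section~\ref{sec:tree}, and compute the probability that the root is neither collapsed nor isolated after $t$ pruning phases, transferring back via Lemma~\ref{l:simp-compl-local-tree} and linearity of expectation. Your explicit Poisson-splitting computation merely fills in details the paper leaves implicit, and it confirms the sign $-c\gamma_t+c\gamma_t^2$ appearing in the theorem statement (the paper's proof text writes $+c\gamma_t-c\gamma_t^2$, a sign typo), as well as the $o(n)$ correction that the theorem's exact equality tacitly suppresses.
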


We start by proving some important lemmas about the local structure of the complex. For \(X \sim X(n,\frac{c}{n})\), let us define the following event,  \[D:=\{deg(v)\leq \log n \,\,\,  \forall v \in \tilde{f_0}(X)\}.\] Then, the following lemma can be proved using standard Chernoff bounds.
\begin{lemma}
\label{l:prob-D-almost-1}
$Pr\{D\}=1-o_{n}(1)$.
\end{lemma}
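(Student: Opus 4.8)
The plan is to prove this by a Chernoff tail bound on the degree of a single vertex, followed by a union bound over all $n$ vertices. First I would observe that in the underlying graph $G(n,c/n)$ the degree $\deg(v)$ of any fixed vertex $v$ is a sum of $n-1$ independent Bernoulli$(c/n)$ random variables, so $\deg(v)$ is Binomial$(n-1,c/n)$ with mean $\mu = (n-1)c/n < c$, a constant. Since isolated vertices have degree $0 \le \log n$ automatically, controlling the maximum degree over all $n$ vertices is enough, and I may ignore the restriction to $\tilde{f_0}(X)$ for the purpose of this upper bound.

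Next I would apply the standard multiplicative Chernoff bound in the form $\Pr[\deg(v) \ge k] \le \left(\frac{e\mu}{k}\right)^{k}$, valid for $k > \mu$, which follows from optimizing $\Pr[\deg(v)\ge k]\le \inf_{s>0} e^{-sk}\exp(\mu(e^s-1))$ at $s=\log(k/\mu)$. Taking $k = \log n$ and using $\mu < c$ gives
\[
\Pr[\deg(v) \ge \log n] \;\le\; \left(\frac{ec}{\log n}\right)^{\log n}.
\]
To see that this survives a factor of $n$, I would take logarithms: multiplying the right-hand side by $n$ yields $\exp\!\big(\log n\,(1 + \log(ec) - \log\log n)\big)$, whose exponent tends to $-\infty$ because the super-constant term $\log\log n$ eventually dominates the constant $1+\log(ec)$.

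Finally I would conclude by a union bound over the $n$ vertices:
\[
\Pr[D^c] = \Pr\!\big[\exists\, v\in\tilde{f_0}(X):\deg(v) > \log n\big] \;\le\; n\left(\frac{ec}{\log n}\right)^{\log n} = o_n(1),
\]
so $\Pr\{D\} = 1 - o_n(1)$ as claimed. I do not expect a genuine obstacle here, since the argument is a routine Chernoff-plus-union-bound estimate; the only point requiring care is verifying that the threshold $\log n$ grows fast enough relative to the constant mean $\mu < c$ that the union-bound factor of $n$ is absorbed, which the computation above confirms precisely through the $\log\log n$ term.
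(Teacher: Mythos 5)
Your proposal is correct and follows essentially the same route as the paper: the degree of each vertex is $\mathrm{Bin}(n-1,c/n)$, the multiplicative Chernoff bound in the form $\left(\frac{e\mu}{k}\right)^{k}$ with $k=\log n$ gives $\left(\frac{ec}{\log n}\right)^{\log n}$ per vertex, and a union bound over the $n$ vertices absorbs the extra factor of $n$ since the exponent is driven to $-\infty$ by the $\log\log n$ term. Your version is in fact slightly cleaner in justifying the final $o_n(1)$ step, which the paper compresses into the expression $n^{1-\log(\log(n/ec))}$.
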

\begin{proof}[Proof of Lemma~\ref{l:prob-D-almost-1}]
Note That for any $v \in \tilde{f_0}(X)$, $deg(v) \sim Bin(n-1,c/n)$. Let $D_v$ be the event that $deg(v)\leq \log(n)$. Then by the Chernoff Bound on Binomial Distribution

\begin{flalign*}
Pr\{\neg D_v\}
&=Pr\{deg(v) > \log(n)\}&\\
&\leq (\frac{ec(n-1)}{n \log(n)})^{\log(n)}&\\
&\leq (\frac{ec}{\log(n)})^{\log(n)}&\\
&=n^{-\log(\log(n/ec))}
\end{flalign*}
Thus by the union bound $Pr\{\bigcup_{v \in f_0(X)} \neg D_v\} \leq n^{1-\log(\log(n/ec))}$. Hence,

\[Pr\{D\}=Pr\{\bigcap_{v \in f_0(X)} D_v\}=1-Pr\{\bigcup_{v \in f_0(X)} \neg D_v\} \geq 1- n^{1-\log(\log(n/ec))}=1-o_n(1) \]

\end{proof}

By $Cl(S)$ we denote the simplicial closure of the set $S$.
Fix $v \in \tilde{f_0}(X)$. Define $\n_0:=Cl(v)$ and $\n_{-1}:=\emptyset$. Also define $\n_{i+1}:=Cl(\{s \in X|\exists u \in \tilde{f_0}(\n_i)\setminus \tilde{f_0}(\n_{i-1})| u \subset s\})\cup \n_i$. Equivalently, this can also be defined in terms of the $1$-skeleton of the complex.



Define the event $\A_t=\{\n_t \in \mathcal{T} \}$.

\begin{lemma}
\label{l:simp-compl-local-tree}
Let $ X \sim X(n,p)$ and fix $v \in \tilde{f_0}(X)$. Then $Pr\{\A_t \cap D\}=1-o(1)$  .
\end{lemma}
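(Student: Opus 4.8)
The plan is to split the bad event $\neg(\A_t \cap D)$ into the two separate failures and bound each. Since Lemma~\ref{l:prob-D-almost-1} already gives $Pr\{\neg D\}=o(1)$, and since
\[
Pr\{\A_t \cap D\} \ge 1 - Pr\{\neg \A_t \cap D\} - Pr\{\neg D\},
\]
it suffices to prove $Pr\{\neg \A_t \cap D\}=o(1)$; that is, conditioned on every non-isolated vertex having degree at most $\log n$, the local neighbourhood $\n_t$ is a.a.s. a genuine rooted tree. First I would make the reduction from the flag complex $\n_t$ to the underlying graph $G=G(n,c/n)$ explicit. By construction every vertex of $\n_t$ lies within graph-distance $t$ of $v$, so the vertex set of $\n_t$ sits inside the ball $B_t(v):=\{u:\mathrm{dist}_G(v,u)\le t\}$, and every edge of $\n_t$ is a genuine edge of $G$. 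Because $\mathcal{T}$ consists of all finite rooted trees (the Poisson offspring law has full support on $\mathbb{N}$), we have $\n_t\in\mathcal{T}$ exactly when $\n_t$ is a connected acyclic $1$-dimensional complex; a $2$-simplex would be a triangle, i.e.\ a $3$-cycle, so $\n_t\notin\mathcal{T}$ forces a cycle in the $1$-skeleton, hence a cycle in $G[B_t(v)]$. Therefore
\[
\neg \A_t \cap D \ \subseteq\ \{\,G[B_t(v)] \text{ contains a cycle}\,\}\cap\{\,|B_t(v)|\le M\,\},
\]
where, on $D$, one has $|B_t(v)|\le \sum_{i=0}^{t}(\log n)^i \le M:=2(\log n)^t$ for $n$ large. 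This is precisely the point at which the degree bound $D$ is indispensable: without it the ball could be huge and could support very long cycles, which a naive first-moment count over cyclic witnesses cannot absorb.

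Next I would expose $G[B_t(v)]$ by a breadth-first search from $v$, revealing one level at a time, and let $\mathcal{F}$ be the $\sigma$-algebra generated by the BFS tree together with its vertex set $W=B_t(v)$. The crucial observation is that forward BFS never queries the edges between two already-discovered vertices, so conditionally on $\mathcal{F}$ the ``non-tree'' pairs inside $W$ are mutually independent and each present with probability $c/n$. Since the BFS tree spans $W$, the induced graph $G[W]$ contains a cycle if and only if at least one non-tree pair is present, whence
\[
Pr\{\,G[W]\text{ has a cycle}\mid \mathcal{F}\,\}\ \le\ \binom{|W|}{2}\,\frac{c}{n}.
\]
Taking expectations and restricting to $\{|W|\le M\}$ yields $Pr\{\neg \A_t \cap D\}\le \binom{M}{2}\,c/n = O\!\big((\log n)^{2t}/n\big)=o(1)$, and combining this with $Pr\{\neg D\}=o(1)$ through the displayed inequality finishes the proof.

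The hard part will be the conditioning and independence bookkeeping in the BFS step: one must argue carefully that conditioning on the already-explored tree leaves the remaining non-tree pairs independent Bernoulli$(c/n)$, and that the degree event $D$ is invoked \emph{only} through the size bound $|W|\le M$, so that the edge probabilities used in the union bound are not themselves distorted by conditioning on the global event $D$. A secondary but genuinely necessary point is the equivalence between acyclicity of $G[B_t(v)]$ and the membership $\n_t\in\mathcal{T}$, which rests on the clique (flag) structure of $X$ — there every higher-dimensional simplex manifests as a short cycle in the $1$-skeleton, so controlling cycles in the graph controls the whole complex.
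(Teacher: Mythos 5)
Your proposal is correct and follows essentially the same route as the paper: on the event $D$ the ball around $v$ has only polylogarithmically many vertices, and one then bounds the probability that any of the $O\bigl((\log n)^{2t}\bigr)$ potential non-tree edges (each present with probability $c/n$) appears, which is $o(1)$. Your BFS-exposure bookkeeping and the explicit reduction from $\n_t\in\mathcal{T}$ to acyclicity of the $1$-skeleton are just a more careful rendering of the conditioning that the paper's two-line proof leaves implicit.
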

\begin{proof}
If $X \in D$ then $\tilde{f_0}(\n_{t})=O(log^{t+1}n)$ and 
we want to avoid $O(log^{2t+2}n)$ edges to make $\n_t$ a one dimensional tree. Probability of that happening is 
\[(1-c/n)^{O(log^{2t+2}n)}=1-o(1)\]
\end{proof}

Now the the degree of a node of this tree comes from $Bin(n-1,c/n)$. For large $n$ this distribution can be approximated by $Po(c)$.




\begin{proof}[Proof of Theorem~\ref{expectation}]
Recall that for a simplicial complex $X$, $R_{t}(X)$ denotes a complex obtained after $t$ 
phases and $f_{0}(X)$ denotes the set of non-isolated vertices ($0-$simplices) of the complex. 
Note that if a vertex $v \in f_0(X)$ survived $t$ pruning steps then it must have had degree $deg(v) \geq 2$ 
after $t-1$ pruning steps. Thus, $Pr\{v \in f_0(R_t(X))\} \leq \gamma_{t}^{\geq 2} = 1-\gammat(1+c(1-\gamma_{t-1}))$. 
This event counts both the isolated vertices and degree one, (i.e., collapsible) vertices. Thus this gives a slight over estimate. To get more precise estimate we observe that, in the spirit of \cite{DBLP:journals/rsa/AronshtamL16},  that a vertex $v \in f_0(X)$ survives $t$ pruning steps if it is neither collapsed nor isolated  after $t$ pruning steps. Probability of such an event is $1-\gamma_{t+1}+c\gammat-c\gammat^2$. The previous lemma asserts that it is indeed the survival probability of a vertex of the simplicial complex.
\end{proof}


\section{Rate of Convergence}
\label{sec:rate-conv}
In this section, we prove Theorem~\ref{thm:rate-conv-main} thus giving bounds on the rate of convergence of the variable $\gamma_t$.
\begin{lemma}
Let $c, \gammat$ be as defined earlier. Then \[c\gammat \leq \frac{\gamma_{t+1}-\gammat}{\gammat-\gamma_{t-1}}\leq c\gamma_{t+1}\]
\end{lemma}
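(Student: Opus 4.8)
The plan is to recognize the given ratio as a difference quotient of $f$ and to control it via the mean value theorem. Recall from Section~\ref{sec:tree} that the sequence satisfies the recursion $\gamma_{t+1}=f(\gamma_t)$ with $f(x)=e^{-c(1-x)}$, that $f^\prime(x)=cf(x)$, and that $(\gamma_t)$ is strictly increasing; in particular $\gamma_t-\gamma_{t-1}>0$, so the quotient in the statement is well defined. These are the only facts I expect to need.

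First I would rewrite the numerator using the recursion: $\gamma_{t+1}-\gamma_t = f(\gamma_t)-f(\gamma_{t-1})$. Applying the mean value theorem to $f$ on the interval $[\gamma_{t-1},\gamma_t]$ produces a point $\xi\in(\gamma_{t-1},\gamma_t)$ with $f(\gamma_t)-f(\gamma_{t-1})=f^\prime(\xi)\,(\gamma_t-\gamma_{t-1})$. Dividing through by $\gamma_t-\gamma_{t-1}$ (legitimate since it is positive) gives the clean identity $\frac{\gamma_{t+1}-\gamma_t}{\gamma_t-\gamma_{t-1}}=f^\prime(\xi)$, which reduces the whole problem to bounding $f^\prime(\xi)$ for this intermediate point.

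Next I would exploit the structural identity $f^\prime(x)=cf(x)$. Since $f$ is strictly increasing on $[0,1]$, so is $f^\prime$, and from $\gamma_{t-1}<\xi<\gamma_t$ we obtain $f^\prime(\gamma_{t-1})\le f^\prime(\xi)\le f^\prime(\gamma_t)$. Evaluating the endpoints with $f^\prime=cf$ and the recursion $f(\gamma_{t-1})=\gamma_t$, $f(\gamma_t)=\gamma_{t+1}$ yields $f^\prime(\gamma_{t-1})=c\,f(\gamma_{t-1})=c\gamma_t$ and $f^\prime(\gamma_t)=c\,f(\gamma_t)=c\gamma_{t+1}$. Chaining these gives exactly $c\gamma_t\le\frac{\gamma_{t+1}-\gamma_t}{\gamma_t-\gamma_{t-1}}\le c\gamma_{t+1}$, as required.

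This argument is short and uses no estimation beyond the mean value theorem, so there is no serious obstacle; the only points requiring care are bookkeeping ones, namely checking that the denominator is strictly positive (monotonicity of $(\gamma_t)$) and that the inequalities are oriented correctly (monotonicity of $f^\prime$, equivalently of $f$). Both were established in Section~\ref{sec:tree}, so the proof will be essentially a two-line application of the mean value theorem followed by the substitution $f^\prime=cf$.
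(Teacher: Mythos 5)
Your proof is correct and is essentially the paper's argument: both reduce the quotient to $f^\prime(s)=cf(s)$ at an intermediate point $s\in[\gamma_{t-1},\gamma_t]$ and conclude by monotonicity of $f$. The only cosmetic difference is that the paper routes the mean value theorem through $g(x)=f(x)-x$ (using $g(\gamma_{t-1})=\gamma_t-\gamma_{t-1}$), whereas you apply it directly to $f$, which is a slightly cleaner formulation of the same idea.
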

\begin{proof}
Let $f(x)$ and $g(x)$ be as defined in section \ref{sec:tree}. Clearly,

\begin{flalign*}
\frac{\gamma_{t+1}-\gammat}{\gammat-\gamma_{t-1}}
&=\frac{g(\gammat)}{g(\gamma_{t-1})}\\
&=\frac{g(\gamma_{t-1}+(\gammat - \gamma_{t-1}))}{g(\gamma_{t-1})}\\
&=\frac{g(\gamma_{t-1})+g^\prime(s)(\gammat - \gamma_{t-1})}{g(\gamma_{t-1})} & \text{(for some $s \in [\gamma_{t-1},\gammat]$)}\\
&=f^\prime(s)=cf(s) & \text{(as $\gammat-\gamma_{t-1}=\gamma_{t-1}$)}
\end{flalign*}

As $f(x)$ is an increasing function the result follows.
\end{proof}

In particular, $ce^{-c} \leq \frac{\gamma_{t+1}-\gammat}{\gammat-\gamma_{t-1}}\leq c\gamma$ for all $t \geq 1$.

Let $\delta(t)= (1-\gamma_{t+1}-c\gammat+c\gammat^2) - (1-\gamma_{t+2}-c\gamma_{t+1}+c\gamma_{t+1}^2)$, as defined in section \ref{sec:second-epoch}. It can be shown that

\begin{flalign*}
\delta(t)
&= (1-\gamma_{t+1}-c\gammat+c\gammat^2) - (1-\gamma_{t+2}-c\gamma_{t+1}+c\gamma_{t+1}^2) \\
&= (\gamma_{t+2}-\gamma_{t+1})+ c(\gamma_{t+1}-\gammat)+c(\gammat+\gamma_{t+1})(\gamma_{t}-\gamma_{t+1})\\
&=(\frac{\gamma_{t+2}-\gamma_{t+1}}{\gamma_{t+1}-\gamma_{t}}+c-c(\gamma_{t+1}+\gamma_{t}))(\gamma_{t+1}-\gammat)
\end{flalign*}

Hence, \[(c-c\gamma)e^c(ce^{-c})^t \leq \delta(t) \leq (c+1)e^c(c\gamma)^t\]

Now define $\epsilon \equiv \epsilon(t):=(1-\gamma_{t+1}-c\gammat+c\gammat^2) - (1-\gamma-c\gamma+c\gamma^2)$ so that $\E[\abs{f_0(R_t(X))}]-(1-\gamma)(1-c\gamma]=\epsilon n$. Consequently, 

\begin{flalign*}
\epsilon(t)
&= (1-\gamma_{t+1}-c\gammat+c\gammat^2) - (1-\gamma-c\gamma+c\gamma^2) \\
&= (\gamma-\gamma_{t+1})+ c(\gamma-\gammat)+c(\gammat+\gamma)(\gamma_{t}-\gamma)\\
&=(\gamma-\gamma_{t+1})+(c-c(\gamma_{t}+\gamma))(\gamma-\gammat)
\end{flalign*}

So, \[\frac{(c-c\gamma)e^c}{1-ce^{-c}}(ce^{-c})^t \leq \epsilon(t) \leq \frac{(c+1)e^c}{1-c\gamma}(c\gamma)^t.\]

Thus we get the following corollary.

\begin{corollary}
for any $t\geq 1$ \[e^c(ce^{-c})^t \leq \gamma_{t+1} - \gammat \leq e^c(c\gamma)^t \]

and 

\[\frac{e^c}{1-ce^{-c}}(ce^{-c})^t \leq \gamma - \gammat \leq \frac{e^c}{1-c\gamma}(c\gamma)^t. \]
\end{corollary}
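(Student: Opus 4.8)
The plan is to read the corollary as an immediate consequence of the ratio estimate proved just above, converting multiplicative control of consecutive gaps into geometric decay and then summing. Write $d_t := \gamma_{t+1}-\gammat$ for the $t$-th gap; the initial values fixed in Section~\ref{sec:tree} give $d_0 = \gamma_1-\gamma_0 = e^{-c}$, and the ``in particular'' consequence of the preceding lemma supplies the uniform two-sided ratio bound $ce^{-c}\le d_t/d_{t-1}\le c\gamma$ for every $t\ge 1$. First I would treat the bound on $\gamma_{t+1}-\gammat$ by telescoping the ratios,
\[ d_t = d_0\prod_{i=1}^{t}\frac{d_i}{d_{i-1}}, \]
and since each of the $t$ factors is indexed by $i\in\{1,\dots,t\}$ (exactly the range where the uniform bound applies) and lies in $[ce^{-c},c\gamma]$, this yields $d_0\,(ce^{-c})^t\le d_t\le d_0\,(c\gamma)^t$, i.e. the first pair of inequalities with leading constant $d_0=e^{-c}=\gamma_1$.

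For the bound on $\gamma-\gammat$ I would use that $(\gammat)$ increases to its limit $\gamma$ (the Remark closing Section~\ref{sec:tree}) to write the gap to the limit as a telescoping series of single-step gaps,
\[ \gamma-\gammat = \sum_{i=t}^{\infty} d_i. \]
Substituting the per-gap bounds just derived and summing the two geometric series termwise then gives
\[ \frac{d_0\,(ce^{-c})^t}{1-ce^{-c}} \;\le\; \gamma-\gammat \;\le\; \frac{d_0\,(c\gamma)^t}{1-c\gamma}, \]
which is the second pair of inequalities, with the two geometric factors $(1-ce^{-c})^{-1}$ and $(1-c\gamma)^{-1}$ appearing exactly as in the statement.

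Since the substantive work is already contained in the ratio lemma, I expect no real conceptual obstacle here; the corollary is a telescoping-product-plus-geometric-sum computation. The only genuine points requiring care are the convergence (hence summability) of the two series: the lower one converges because $ce^{-c}<1$ for all $c>1$ (the map $x\mapsto xe^{-x}$ attains its maximum $e^{-1}<1$ at $x=1$), and the upper one converges because $c\gamma<1$, which is precisely Lemma~\ref{drift1}; these same strict inequalities keep the displayed denominators positive. Because every $d_i\ge 0$ (the sequence $\gammat$ is increasing), the termwise summation raises no issue of conditional convergence. Thus the part most prone to error is simply bookkeeping the leading constant $d_0=e^{-c}$ and the geometric factors correctly, together with confirming that the uniform ratio bound is available across the full index range used in each product and sum — all of which reduce to the ``in particular'' form holding for every $t\ge 1$, as already established.
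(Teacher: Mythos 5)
Your telescoping-plus-geometric-sum argument is exactly the derivation the paper intends: the paper gives no separate proof of this corollary, presenting it as an immediate consequence of the ratio bound $ce^{-c}\le(\gamma_{t+1}-\gamma_t)/(\gamma_t-\gamma_{t-1})\le c\gamma$, which is precisely what you iterate and then sum. Your execution is also sound: with $d_t=\gamma_{t+1}-\gamma_t$ and $d_0=\gamma_1-\gamma_0=e^{-c}$, the telescoping product and the geometric series give
\begin{align*}
e^{-c}(ce^{-c})^t \;\le\; \gamma_{t+1}-\gamma_t \;\le\; e^{-c}(c\gamma)^t,
\qquad
\frac{e^{-c}}{1-ce^{-c}}(ce^{-c})^t \;\le\; \gamma-\gamma_t \;\le\; \frac{e^{-c}}{1-c\gamma}(c\gamma)^t,
\end{align*}
with summability guaranteed by $ce^{-c}<1$ and $c\gamma<1$ (Lemma~\ref{drift1}).

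The gap is in your final reconciliation step: you assert these are ``exactly'' the stated inequalities, but the corollary's leading constant is $e^{c}$, not your $d_0=e^{-c}$, and the difference is not cosmetic. With $e^{c}$ the lower bounds are false: at $t=1$ the first one reads $e^{c}\cdot ce^{-c}=c\le\gamma_2-\gamma_1$, which is impossible since $\gamma_2-\gamma_1<\gamma<1/c<1<c$ for every $c>1$; the lower bound in the second display fails the same way. So no argument could prove the statement as literally written; the paper's $e^{c}$ is evidently a sign error for $e^{-c}$ (and it propagates into the constants of Theorem~\ref{thm:rate-conv-main}), while your derivation yields the corrected constants --- note your upper bounds do imply the stated, weaker ones, since $e^{-c}<e^{c}$. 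A complete answer should have flagged this discrepancy explicitly, i.e., stated that the telescoping argument establishes the corollary only with $e^{-c}$ in place of $e^{c}$ and that the literal lower bounds cannot hold, rather than claiming agreement with the statement.
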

From the above corollary, we get the  Theorem~\ref{thm:rate-conv-main}.

\section{Concentration of Size of the Complex after the First Epoch}
\label{sec:concentration}
In this section, we shall prove a concentration bound on the size of the core. Unlike in the case of simple collapses in $d$-dimensional LM 
complexes~\cite{DBLP:journals/rsa/AronshtamL15,DBLP:journals/rsa/AronshtamL16}, concentration bounds in our case are less straightforward. 
Observe firstly, that deleting a single vertex $v$ could potentially change the domination status of an arbitrary number of vertices, as for example
when $v$ dominates the entire complex. Thus the influence of a vertex can be $n$ in the worst case. Therefore we shall need to use an edge exposure martingale inequality, 
in the form of a variant of an inequality of Freedman~\cite{10.1214/aop/1176996452}, given by Warnke~\cite{warnke_2016}, which allows us to consider the \emph{path variance} of the effect 
of a single edge, rather than the worst case effect.  \\

In order to bound the path variance, we shall show that if the influence of 
a variable is large, there is a specific class of subcomplexes, which we call \emph{Critical Complexes}, one of which must occur in the $1$-skeleton of the complex. 
It is not hard to show (and we do) that the probability of occurence of these subgraphs is vanishingly low in the original random complex. However, the variance needs to be controlled  
at all steps in the edge exposure martingale, i.e. when we are computing expectations over arbitrarily small subcomplexes of the original complex. To handle this, we need to define a superset 
of critical complexes, which we call \emph{Precritical Complexes}, and show that their probability of occurence will still be vanishingly small throughout the edge exposure process.
We can then define a stopped martingale which stops if at any step of the edge exposure process, a precritical complex occurs, and prove concentration bounds using Warnke's inequality
for this martingale. The final concentration bound is then the bound obtained for the stopped martingale, together with the probability that the martingale ever encounters a precritical complex.  \\

Fix $p=\frac{c}{n}$ and $m=\frac{n(n-1)}{2}$. For $1 \leq i \leq m$, $ e_i \sim Bernoulli(p)$ be i.i.d. random variables corresponding 
to existance of edges. Clearly $X(n,p) = e_1 \times \cdots \times e_m$ as probability spaces. Now we can define a filtration of 
$\sigma$-algebras $\{\F_i\}_{i=0}^{m}$ on $X(n,p)$ by setting $\F_i$ to the $\sigma$-algebra corresponding to $e_1, \cdots, e_i$.

Let $X \sim X(n,p)$. Now we construct an edge exposure martingale (see e.g.~\cite{AS-00} for a definition of the edge exposure martingale) as follows: 
Clearly, $Y_m=\abs{f_0(R_{t}(X))} $ and $Y_0=\E(\abs{f_0(R_{t}(X))} )$.

This section is devoted to prove the following concentration result, which says that the size of the complex after $t$ pruning rounds of the first epoch is close to 
its expected value with high probability. 

\begin{theorem}
\label{maintheorem}
(Main Theorem) Let $X \sim X(n,p)$. Let  $\abs{f_0(R_{t}(X))} $ be number of vertices after $t$ strong collapsing phases and 
$Y_0=\E(\abs{f_0(R_{t}(X))} )$ be its expected value. Then for any $s \geq 0$ we have,

\begin{eqnarray*}
  Pr\{\abs{\abs{f_0(R_{t}(X))}-Y_0}\geq s \cdot n^{\frac{2}{3}}\}&\leq& \\ 
      2\exp\pth{-\frac{s^2\cdot n^{\frac{1}{3}}}{(c4^{t+1}  + (2/3)sn^{-1/3}2^{t+1})+O(1/n)}}+O(1/n)&=&o_n(1).
\end{eqnarray*}
\end{theorem}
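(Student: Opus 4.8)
The plan is to prove Theorem~\ref{maintheorem} via an edge-exposure martingale concentration inequality, following the strategy sketched at the start of this section. The martingale is $Y_i = \E[\abs{f_0(R_t(X))} \mid \F_i]$, which interpolates between $Y_0 = \E[\abs{f_0(R_t(X))}]$ and $Y_m = \abs{f_0(R_t(X))}$. The central difficulty, already flagged in the text, is that a single edge can in principle change the domination status of $\Theta(n)$ vertices, so the worst-case Lipschitz constant of the martingale is far too large to feed into Azuma--Hoeffding. I would therefore base the argument on Warnke's variant of Freedman's inequality, which replaces the worst-case bound by a bound on the \emph{path variance} $\sum_i \E[(Y_i - Y_{i-1})^2 \mid \F_{i-1}]$ together with a uniform, but much weaker, almost-sure bound $\abs{Y_i - Y_{i-1}} \leq R$ that only needs to hold outside a rare bad event.

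\textbf{Step 1: localize the influence.} First I would show that, conditioned on the degree-boundedness event $D$ of Lemma~\ref{l:prob-D-almost-1}, and using the tree-approximation of Section~\ref{sec:tree} (Lemma~\ref{l:simp-compl-local-tree}), the effect of exposing a single edge $e_i$ on the count $\abs{f_0(R_t(X))}$ is \emph{typically} confined to a small neighborhood. Since $t$ is constant, the pruning process that determines whether $v\in f_0(R_t(X))$ depends only on the combinatorial ball of radius $t+1$ around $v$; under $D$ this ball has size $O(\log^{t+1} n)$, so a single edge flip can only alter survival decisions for the $O(\log^{2(t+1)} n)$ vertices within distance $t+1$ of its endpoints. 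This gives a typical per-edge influence of order $2^{t+1}$ (the branching over $t$ levels) rather than $n$.

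\textbf{Step 2: quantify the bad event via critical/precritical complexes.} The typical bound in Step~1 fails only when an exceptional local structure appears. I would formalize this as the \emph{critical complexes} mentioned in the section preamble: these are the specific subcomplexes whose presence in the $1$-skeleton allows a single edge to propagate influence far beyond the local ball. The key estimates to establish are (i) the probability that any critical complex occurs is $O(1/n)$ in $X(n,p)$, and (ii) — the more delicate point needed for the martingale — the probability that a \emph{precritical} complex (the closure of critical complexes under sub-edge-sets, ensuring the bad set is monotone along the exposure filtration) occurs at \emph{any} step of the edge-exposure process remains $O(1/n)$. Claim (ii) is what lets me define a stopped martingale that halts on the first appearance of a precritical complex and apply Warnke's inequality to it.

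\textbf{Step 3: assemble the variance and apply Warnke.} Outside the precritical event I can take the almost-sure increment bound $R = 2^{t+1}$ (up to constants) from Step~1, and I can bound the conditional variance contribution of each edge by roughly $p \cdot (2^{t+1})^2$; summing over $m = \binom{n}{2}$ edges gives a total path variance of order $c\,4^{t+1}\,n$, matching the $c4^{t+1}$ factor in the denominator of the stated bound. Plugging $R \sim 2^{t+1}$, total variance $\sim c4^{t+1}n$, and deviation $s\cdot n^{2/3}$ into Warnke's inequality yields the exponential term with $n^{1/3}$ in the numerator and the $(c4^{t+1} + \tfrac{2}{3}sn^{-1/3}2^{t+1})$ denominator; the additive $O(1/n)$ terms come from (ii) and from the failure probability of $D$. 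The final $=o_n(1)$ follows since the exponent diverges with $n$.

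\textbf{The main obstacle} I expect is Step~2(ii): carefully defining the precritical class so that it is both (a) genuinely monotone/stable along every prefix of the edge exposure, so that the stopped martingale is well-defined and the path-variance bound $R$ holds pointwise on the unstopped portion, and (b) still rare enough — probability $O(1/n)$ — uniformly over all intermediate exposure steps, not merely in the final complex $X(n,p)$. The non-locality of domination means the bookkeeping here, rather than the Warnke estimate itself, is where the real work lies.
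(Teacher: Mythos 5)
Your proposal reproduces the paper's overall architecture — an edge-exposure Doob martingale, a stopping time triggered by a monotone ``precritical'' witness, Warnke's bounded-variance inequality, and a final union bound that adds the $O(1/n)$ probability of ever seeing a precritical complex — but it has a genuine gap exactly where the paper does its real combinatorial work. The increment bound $2^{t+1}$ does not come from locality. Your Step 1 argues that under the degree event $D$ (Lemma~\ref{l:prob-D-almost-1}) the pruning status of a vertex is determined by its radius-$(t+1)$ ball, so one edge can influence at most $O(\log^{2(t+1)}n)$ vertices, and then asserts that this ``gives a typical per-edge influence of order $2^{t+1}$ (the branching over $t$ levels).'' That last step is a non sequitur: the ball-size bound is polylogarithmic (and in the tree approximation the branching factor is $c$, not $2$), so locality alone would only give a weaker bound with polylog factors in the exponent, not the constants $2^{t+1}$ and $c4^{t+1}$ appearing in the statement. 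The paper instead proves a geometric lemma (Lemma~\ref{geometriclemma1}): deleting a single vertex creates $k\geq 3$ newly dominated vertices only if a configuration with $f_1-f_0\geq 3k-4$ is present, an event of probability $O(1/n^{3k-4})$. Outside that event every deletion spawns at most \emph{two} new dominated vertices, so the cascade triggered by one edge over $t$ phases has size at most $2^{t+1}$. Accordingly, you have misidentified the role of the critical complexes: they are not witnesses for propagation ``far beyond the local ball'' (the process is local in any case), but witnesses for branching factor $\geq 3$ inside the cascade.

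The second, related, gap is in Steps 2--3. You claim an ``almost-sure increment bound $R=2^{t+1}$'' on the unstopped portion, with the precritical class defined as the ``closure of critical complexes under sub-edge-sets.'' Neither works as stated. Even when the exposed prefix contains no precritical complex, the Doob increment is a conditional expectation over the \emph{unexposed} edges, which may yet complete a critical complex with worst-case influence $n$; the increment is therefore only bounded by $2^{t+1}+n\cdot\Pr\{\text{a critical complex completes}\mid \F_i\}$, and one needs the conditional estimate that this probability is $O(1/n^2)$ (in the paper: a precritical-free prefix forces at least $5$ unexposed edges on at most $3$ new vertices, giving $O(n^3p^5)=O(1/n^2)$, hence increment $2^{t+1}+O(1/n)$). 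This is precisely why the paper calibrates precritical complexes as critical complexes minus any \emph{four} edges: rare enough that $\Pr\{N'\}=O(1/n)$, yet inclusive enough that its absence forces $\geq 5$ missing edges for any completion. Your definition — all sub-edge-sets — is vacuous, since a single edge (indeed the empty set) is a sub-edge-set of a critical complex, so your stopping time would fire at the first step. Closing these two points is not bookkeeping; it is the content of Lemma~\ref{geometriclemma1}, its corollaries, and the $M\sqcup M'$ computation in the proof of Theorem~\ref{SMI}.
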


To prove this we begin by observing some combinatorial results. In the following 
lemmas, we show that the influence of deleting one vertex is bounded, with high probability.   
\begin{lemma}\label{geometriclemma1}
Pr\{deleting a vertex $b$ gives birth to $k$ \textit{newly generated} dominated vertices\} $\leq O(\frac{1}{n^{3k-4}}).$
\end{lemma}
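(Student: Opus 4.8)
The natural approach is the first–moment (union bound) method: I would turn the event into the appearance of a small dense witnessing subgraph and estimate its expected number of occurrences. The starting point is the local mechanism of a new domination. Using the flag–complex criterion (Lemma~\ref{nbd_contenment}), if deleting $b$ makes a vertex $u$ dominated by some $w$ when it was not dominated before, then $N_G[u]\setminus\{b\}\subseteq N_G[w]$ while $N_G[u]\not\subseteq N_G[w]$. Hence $b\sim u$, $b\not\sim w$, and $b$ is the \emph{unique} obstruction: every neighbour of $u$ other than $b$ lies in $N_G[w]$, and each such neighbour $x$ forms a triangle $uxw$. In particular $b$ is adjacent to every newly dominated vertex, and (exactly as in Lemma~\ref{l:simp-compl-local-tree}) I would condition throughout on the high–probability event $D$ of Lemma~\ref{l:prob-D-almost-1}, under which $\deg(b)\le\log n$ and neighbourhoods are tree–like; this bounds the number of candidate vertices by $\mathrm{polylog}\,n$ and leaves only finitely many relevant configuration shapes.

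Next I would encode a witness for "$k$ new dominations" as a labelled graph $H$ on $b$, the vertices $u_1,\dots,u_k$, their dominators $w_i$, and the common neighbours certifying $N_G[u_i]\setminus\{b\}\subseteq N_G[w_i]$. Each $u_i$ forces the edge $bu_i$ together with its certifying triangle(s); the non–edges $bw_i$ and those needed for "$u_i$ not dominated before" contribute factors at most $1$ and may be dropped for an upper bound. For a fixed isomorphism type of $H$ with $V$ free (non-$b$) vertices and $E$ present edges, the expected number of copies in $X(n,c/n)$ is $\Theta\pth{n^{V}p^{E}}=\Theta\pth{n^{V-E}}$, so a union bound over the finitely many admissible types yields $\Pr\{\text{$k$ new dominations}\}\le O\pth{n^{-\min_H(E-V)}}+o(1)$.

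The whole statement then reduces to the edge–counting estimate $\min_H(E-V)=3k-4$, and this is the step I expect to be the real obstacle; it is structural rather than computational. A naive treatment allowing $k$ vertex–disjoint triangles hanging off $b$ yields only $E-V=k$, so the improvement to $3k-4$ must be extracted from the finer structure of a \emph{newly generated} domination — namely that the cascade set off by deleting $b$ is not $k$ independent local events, but forces the certifying gadgets to share dominators and witnesses and to knit into a single rigid (critical) subcomplex. Quantitatively, the first two new dominations should cost $n^{-2}$ and each further one an additional $n^{-3}$. I would prove this by a discharging argument on $H$: assign to each $u_i$ its edge $bu_i$ and its certifying triangle edges, show that the only way to cheapen a gadget is to reuse vertices of other gadgets, and verify that every such reuse raises $E$ at least as fast as it lowers $V$, so that the extremal (minimising) type gives exactly $3k-4$.

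Finally I would remove the conditioning using $\Pr\{\neg D\}=o(1)$ to obtain the stated $O\pth{n^{-(3k-4)}}$, and sanity–check the boundary cases $k=1$ (where the bound is vacuous) and $k=2$ (the base of the extremal family) by hand, since these anchor the discharging argument and guard against an off–by–one in the exponent.
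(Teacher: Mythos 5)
Your overall strategy is the same as the paper's: bound the probability by the expected number of minimal witnessing subgraphs, where a witness with $f_0$ vertices and $f_1$ edges contributes $\binom{n}{f_0}(c/n)^{f_1}=O(n^{f_0-f_1})$, and reduce everything to showing the extremal witness has edge--vertex deficit $3k-4$. However, there is a genuine gap, and it sits exactly at the step you yourself flagged as ``the real obstacle'': under the structural constraints you extracted, the claim $\min_H(E-V)=3k-4$ is simply \emph{false}, so no discharging argument can establish it. Your constraints on a witness are: $b\sim u_i$, $u_i\sim w_i$, $b\not\sim w_i$, every neighbour of $u_i$ other than $b$ forms a triangle with $w_i$, and $u_i$ is not previously dominated. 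The cheapest graph satisfying all of these is $b$ with $k$ pendant paths $b$--$u_i$--$w_i$ (i.e.\ $N_G(u_i)=\{b,w_i\}$): one checks directly via Lemma~\ref{nbd_contenment} that $u_i$ is not dominated beforehand (because $b\not\sim w_i$) and becomes dominated by $w_i$ once $b$ is removed. This witness has $2k+1$ vertices and $2k$ edges, so $E-V=-1$, and for a fixed $b$ such configurations occur with probability $\Theta(1)$ in $X(n,c/n)$, not $O(n^{-(3k-4)})$.

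What your proposal is missing is the fact, used immediately in the paper's proof, that $b$ is deleted only because it is \emph{itself dominated}, say by a vertex $a$; this is the engine of the whole count. By Lemma~\ref{nbd_contenment}, $N_G[b]\subseteq N_G[a]$ forces $a\sim u_i$ for every newly dominated neighbour $u_i$ of $b$. Then $a\in N_G[u_i]\setminus\{b\}\subseteq N_G[w_i]$ forces $a\sim w_i$, and it also rules out $w_i=a$: since $b\in N_G[a]$, domination of $u_i$ by $a$ after deleting $b$ would imply $u_i$ was dominated by $a$ already. Finally, the requirement that $u_i$ was \emph{not} already dominated by $a$ forces an extra witness vertex $x_i\in N_G[u_i]\setminus N_G[a]$, which must in turn be adjacent to $w_i$. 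These forced vertices and edges are exactly what build the paper's minimal (``critical'') arrangements, its vertices $a$, $c_i$, $d$, $e$, and are what push the deficit from the trivial value up to $3k-4$; the paper then verifies the count by induction on $k$, showing each additional newly dominated vertex costs $3$ more in $f_1-f_0$. So your plan needs two repairs: (i) incorporate the domination of $b$ into the witness structure, and (ii) actually execute the extremal count on the corrected constraint set (the paper's induction, or your discharging scheme), paying attention to which gadget parts ($w_i$'s and $x_i$'s) may be shared among different $u_i$'s, since sharing is what realizes the minimum.
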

\begin{proof}[Proof of Lemma \ref{geometriclemma1}]  We first claim that deleting a vertex $b$ gives birth to $k$ newly generated dominated vertices then $b$ atleast have $k$ neighbors one of which is the dominating vertex of $b$. In the following diagram, the solid arrow denotes domination and the white arrow denotes future domination in the next phase only after deleting vertex $b$. The pointy head of the arrow is towards the dominated vertex. Vertices $a,b,c$ may be connected to other vertices. The following diagrams exhibits some of the potential arrangements.

\begin{minipage}{.2\textwidth}
\begin{tikzpicture}[scale=0.4]

\tikzset{vertex/.style = {shape=circle,draw,minimum size=0.1 em}}
\tikzset{edge/.style = {-> ,> = triangle 60}}
\node[vertex] (a) at  (3,3) {$a$};
\node[vertex] (b) at  (3,0) {$b$};
\node[vertex] (c) at  (1,-2) {$c$};

\draw[edge] (a) to (b);

\tikzset{edge/.style = {-> ,> = open triangle 60 }}

\draw[edge] (a) to (c);

\tikzset{edge/.style = {- ,> = latex'}}

\draw[edge] (b) to (c);

\end{tikzpicture}

\end{minipage}
\begin{minipage}{.2\textwidth}
\begin{tikzpicture}[scale=0.4]

\tikzset{vertex/.style = {shape=circle,draw,minimum size=1.0 em}}
\tikzset{edge/.style = {-> ,> = triangle 60}}
\node[vertex] (a) at  (3,3) {$a$};
\node[vertex] (b) at  (3,0) {$b$};
\node[vertex] (c) at  (1,-2) {$c$};

\draw[edge] (a) to (b);

\tikzset{edge/.style = {-> ,> = open triangle 60 }}

\draw[edge] (c) to (a);

\tikzset{edge/.style = {- ,> = latex'}}

\draw[edge] (b) to (c);

\end{tikzpicture}

\end{minipage}
\begin{minipage}{.2\textwidth}

\begin{tikzpicture}[scale=0.4]

\tikzstyle{every node}=[font=\small]

\tikzset{vertex/.style = {shape=circle,draw,inner sep=0 pt, minimum size=15 pt}}
\tikzset{edge/.style = {-> ,> = triangle 60}}
\node[vertex] (a) at  (3,3) {$a$};
\node[vertex] (b) at  (3,0) {$b$};
\node[vertex] (c1) at  (1,-2) {$c_1$};
\node[vertex] (ci) at  (5,-2) {$c_i$};
\node[vertex] (cj) at  (8,-2) {$c_j$};

\draw[edge] (a) to (b);

\tikzset{edge/.style = {-> ,> = open triangle 60 }}

\draw[edge] (ci) to (cj);

\tikzset{edge/.style = {- ,> = latex'}}

\draw[edge] (b) to (c1);
\draw[edge] (a) to (c1);

\draw[edge] (b) to (ci);
\draw[edge] (a) to (ci);

\draw[edge] (b) to (cj);
\draw[edge] (a) to (cj);

\node at ($(c1)!.5!(ci)$) {\ldots};

\end{tikzpicture}
\end{minipage}

A careful inspection will show that these kind of arrangements are impossible. Indeed if it happens that will imply that the would-be-dominated vertices are already dominated. This is  because we are only deleting $b$ which is a common neighbor of all the would-be-dominated-dominating pairs. Thus neighbors of $b$ can not have white arrows between themselves.


Thus fig:1 gives the necessary minimal arrangements for the birth of $k$ newly generated dominated vertices. In the following diagram, all the $c_i$'s and their corresponding $d$'s are assumed to be connected to some non-neighbor of $a$ which lies in the set $\{e_1, \cdots, e_{l^\prime}\}$. We claim that in such case $f_1-f_0 \geq (k-2) + (k-1) + (k-1) \geq 3k-4$. We shall prove our claim by induction on $k\geq 2$. The case $k=2$ is evident from the following diagram.

\begin{center}
\begin{tikzpicture}[scale=0.4]

\tikzset{vertex/.style = {shape=circle,draw,minimum size=1.0 em}}
\tikzset{edge/.style = {-> ,> = triangle 60}}
\node[vertex] (a) at  (3,3) {$a$};
\node[vertex] (b) at  (3,0) {$b$};
\node[vertex] (c1) at  (1,-2) {$c_1$};
\node[vertex] (d) at  (3,-5) {$d$};
\node[vertex] (e) at  (7,-3) {$e$};

\draw[edge] (a) to (b);

\tikzset{edge/.style = {-> ,> = open triangle 60 }}

\draw[edge] (d) to (c1);

\tikzset{edge/.style = {- ,> = latex'}}

\draw[edge] (b) to (c1);
\draw[edge] (a) to (c1);

\draw[edge] (e) to (c1);
\draw[edge] (d) to (e);

\draw[edge] (a) to[bend right=90] (d);



\end{tikzpicture}
\end{center}

We now prove the induction step. Consider the following figure again. 

\begin{center} 
\begin{tikzpicture}[scale=0.4]

\tikzstyle{every node}=[font=\small]

\tikzset{vertex/.style = {shape=circle, inner sep=0 pt, draw, minimum size=15 pt}}
\tikzset{edge/.style = {-> ,> = triangle 60}}
\node[vertex] (a) at  (3,3) {$a$};
\node[vertex] (b) at  (3,0) {$b$};
\node[vertex] (c1) at  (1,-2) {$c_1$};
\node[vertex] (ck-2) at  (4,-2) {$c_{k-2}$};
\node[vertex] (ck-1) at (6,-2) {$c_{k-1}$};

\tikzset{vertex/.style = {draw,minimum size=10 pt}}
\node[vertex] (dset) at (4,-5) {$\{d_1, \cdots, d_l\}$};
\node[vertex] (eset) at (12,-4) {$\{e_1, \cdots, e_{l^\prime}\}$};
\draw[edge] (a) to (b);

\tikzset{edge/.style = {-> ,> = open triangle 60 }}

\draw[edge] (dset) to (c1);
\draw[edge] (dset) to (ck-2);
\draw[edge] (dset) to (ck-1);

\tikzset{edge/.style = {- ,> = latex'}}

\draw[edge] (b) to (c1);
\draw[edge] (a) to (c1);

\draw[edge] (b) to (ck-2);
\draw[edge] (a) to (ck-2);

\draw[edge] (b) to (ck-1);
\draw[edge] (a) to (ck-1);

\draw[edge] (eset) to (c1);
\draw[edge] (eset) to (ck-2);
\draw[edge] (eset) to (ck-1);
\draw[edge] (eset) to (dset);


\draw[edge] (a) to[bend right=100] (dset);
\draw[edge] (a) to[bend left=100] (dset);
\draw[edge] (a) to[bend left=100] (dset);


\node at ($(c1)!.5!(ck-2)$) {\ldots};

\end{tikzpicture}

\end{center}

Now assume that the claim holds for $k-1$. Now just adding the $kth$ vertex $c_{k-1}$ increases $f_1-f_0$ by $1$. Also the corresponding $d$ and $e$ increases  $f_1-f_0$ by $1$ each. This ends the induction step. So in the all the possible minimal arrangements $f_1-f_0 \geq 3k-4$. Thus expected number of such arrangements is $\binom{n}{f_0}\cdot(c/n)^{f_1}=O(\frac{1}{n^{3k-4}})$. Thus the result follows from Markov's inequality.
\end{proof}

\begin{corollary}
Pr\{deleting a vertex $b$ gives birth to $3$ \textit{newly generated} dominated vertices\} $\leq O(\frac{1}{n^{5}}).$
\end{corollary}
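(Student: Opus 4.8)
The plan is to obtain this bound as an immediate specialization of Lemma~\ref{geometriclemma1}. That lemma asserts that the probability that deleting a vertex $b$ gives birth to $k$ newly generated dominated vertices is at most $O(1/n^{3k-4})$. Setting $k=3$ gives exponent $3k-4 = 9-4 = 5$, so the bound becomes $O(1/n^5)$ directly, with no additional work required. I would therefore state the proof in a single line: apply Lemma~\ref{geometriclemma1} with $k=3$.

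For completeness (and to make the corollary self-contained if one wished), I would alternatively re-derive it by instantiating the minimal-arrangement argument from the proof of Lemma~\ref{geometriclemma1} at $k=3$. Concretely, if deleting $b$ produces three newly generated dominated vertices, then by the structural analysis in that proof (where neighbors of $b$ cannot themselves carry ``future domination'' arrows between one another), the three would-be-dominated vertices $c_1, c_2, c_3$ together with their associated dominating vertices $d$ and the auxiliary non-neighbors $e$ force a lower bound on the number of edges relative to vertices. The induction there yields $f_1 - f_0 \geq 3k-4 = 5$ for $k=3$, so the expected number of such minimal configurations is $\binom{n}{f_0}(c/n)^{f_1} = O(1/n^{5})$, and Markov's inequality converts this expectation bound into the stated probability bound.

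I do not expect any genuine obstacle here, since the statement is a pure instantiation of the preceding lemma; the only point deserving a moment's care is confirming that the arithmetic $3\cdot 3 - 4 = 5$ is applied to the correct exponent and that the hidden constant in the $O(\cdot)$ (which depends on $c$ and $f_0$) is absorbed in the standard way, exactly as in Lemma~\ref{geometriclemma1}. Consequently the shortest and cleanest presentation is simply to cite Lemma~\ref{geometriclemma1} at $k=3$.
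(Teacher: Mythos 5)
Your proposal is correct and matches the paper exactly: the corollary is stated there without a separate proof precisely because it is the $k=3$ instantiation of Lemma~\ref{geometriclemma1}, giving exponent $3\cdot 3-4=5$. Your one-line citation of the lemma is the intended argument, and your optional re-derivation is just an unrolling of the same proof at $k=3$.
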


\begin{corollary}
Let $X \sim X(n,p)$ and $e \in f_{1}(X)$, then 

\[Pr \{ \abs{ f_{0}(R_{t}(X)) \setminus f_{0}(R_{t}(X \setminus \{e\}))  } \geq 2^{t+1}\} \leq O(\frac{1}{n^5}). \]

\end{corollary}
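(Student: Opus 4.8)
The plan is to prove the statement by induction on the number of pruning phases $t$, tracking how the discrepancy created by removing $e$ propagates through the rounds. Write $X' := X \setminus \{e\}$ with $e = [uv]$, and for each $i$ set $D_i := f_0(R_i(X)) \setminus f_0(R_i(X'))$. Since deleting the single edge $[uv]$ alters only the closed neighborhoods $N_G[u]$ and $N_G[v]$ — every other vertex keeps its neighborhood, so by Lemma~\ref{nbd_contenment} a domination relation $N_G[w]\subseteq N_G[x]$ can change only when $w$ or $x$ lies in $\{u,v\}$ — the base case is immediate: before any pruning, $D_0 \subseteq \{u,v\}$, whence $\abs{D_0}\leq 2 = 2^{0+1}$ deterministically.

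For the inductive step I would assume $\abs{D_{i}} \leq 2^{i+1}$ together with the structural invariant that $R_i(X)$ and $R_i(X')$ agree everywhere except on the vertices of $D_i$ and their incident faces. Passing to round $i+1$, a vertex can acquire a different pruning status in the two processes only if the presence or absence of some vertex of $D_i$ flips its domination. I would therefore view each $b \in D_i$ as a vertex whose deletion distinguishes the two processes and invoke Lemma~\ref{geometriclemma1} with $k=3$: with probability $1-O(1/n^5)$ the deletion of a given vertex gives birth to at most two newly dominated vertices. Conditioning on this good event for every $b \in D_i$, each discrepant vertex spawns at most a constant number of new discrepancies, yielding a constant-factor growth $\abs{D_{i+1}} = O(\abs{D_i})$; extracting the sharp factor $2$ needed for the clean bound $\abs{D_{i+1}} \leq 2\abs{D_i}$, and hence $\abs{D_t}\leq 2^{t+1}$, requires the finer accounting described below.

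For the probability, the only way $\abs{D_t} \leq 2^{t+1}$ can fail is that at some round $i \leq t$ one of the $O(2^{t+1})$ discrepant vertices gives birth to at least three newly dominated vertices. As $t$ is a fixed constant this is a union bound over a constant number of vertex-deletion events, each of probability $O(1/n^5)$ by Lemma~\ref{geometriclemma1}. Alternatively, and more robustly, one can bound the failure probability directly by a first-moment estimate on the witnessing subconfiguration, exactly as in the proof of Lemma~\ref{geometriclemma1}: an arrangement forcing $\abs{D_t}\geq 2^{t+1}$ must contain enough excess edges relative to its vertices that its expected count in $X(n,c/n)$ is $O(1/n^5)$, and Markov's inequality finishes. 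Either route delivers the claimed $O(1/n^5)$.

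I expect the main obstacle to be the bookkeeping in the inductive step. One must translate the effect of an \emph{edge} removal into the \emph{vertex}-deletion language of Lemma~\ref{geometriclemma1} and, crucially, pin down the exact factor-two (rather than factor-three) blow-up per round: this means separating genuinely \emph{newly generated} dominated vertices from those already counted, and showing that the reverse effect — vertices that \emph{cease} to be dominated once $e$ is removed — does not enlarge $D_i$ beyond the doubling, despite such vertices persisting in $D_i$ across later rounds. The second delicate point is controlling the data-dependence of $D_i$ so that the aggregate probability stays at $O(1/n^5)$ rather than degrading by a factor of $n$; I would handle this by the direct configuration-counting estimate above rather than a naive union bound over all $n$ candidate vertices.
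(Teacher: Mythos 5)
Your proposal is correct and takes essentially the same route as the paper: the paper's own (two-sentence) proof argues contrapositively that a discrepancy of size $2^{t+1}$ after $t$ rounds forces some vertex deletion in the process to spawn at least three newly generated dominated vertices, and then invokes the $k=3$ case of Lemma~\ref{geometriclemma1} (via its corollary), whose proof is exactly the first-moment configuration count you fall back on to avoid the data-dependent union bound. The per-round doubling induction and the bookkeeping caveats you spell out are left implicit in the paper's version, so your write-up is a more detailed rendering of the same argument rather than a different one.
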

\begin{proof}
If such an event happens then there must be a dominated vertex in the process whose deletion creates atleast $3$ new dominating vertices. Thus the result follows from the previous corollary.
\end{proof}

Let \textit{Critical Complexes} denote the minimal simplicial complexes corresponding to $k=3$ (see the following diagrams). Let \textit{Precritical complexes}  be any of the Critical Complex without any four of the edges.. Let $N$ denote the set of complexes from $X(n,p)$ that contains a Critical Complex and $N^\prime$ denote the set of complexes contains a Precritical Complex. . Clearly, $N^\prime \supseteq N$. 

\begin{minipage}{.45\textwidth}
\begin{tikzpicture}[scale=0.4]

\tikzstyle{every node}=[font=\small]

\tikzset{vertex/.style = {shape=circle,draw,inner sep= 0 pt, minimum size=15 pt}}
\tikzset{edge/.style = {-> ,> = triangle 60}}
\node[vertex] (a) at  (3,3) {$a$};
\node[vertex] (b) at  (3,0) {$b$};
\node[vertex] (c1) at  (1,-2) {$c_1$};
\node[vertex] (c2) at  (5,-2) {$c_2$};
\node[vertex] (d) at  (3,-5) {$d$};
\node[vertex] (e) at  (3,-3) {$e$};

\draw[edge] (a) to (b);

\tikzset{edge/.style = {-> ,> = open triangle 60 }}

\draw[edge] (d) to (c1);
\draw[edge] (d) to (c2);

\tikzset{edge/.style = {- ,> = latex'}}

\draw[edge] (b) to (c1);
\draw[edge] (a) to (c1);

\draw[edge] (b) to (c2);
\draw[edge] (a) to (c2);

\draw[edge] (e) to (c1);
\draw[edge] (e) to (c2);
\draw[edge] (d) to (e);

\draw[edge] (a) to[bend right=90] (d);



\end{tikzpicture}

\end{minipage}
\begin{minipage}{.4\textwidth}
\begin{tikzpicture}[scale=0.4]

\tikzstyle{every node}=[font=\small]

\tikzset{vertex/.style = {shape=circle,draw,inner sep=0 pt, minimum size=15 pt}}
\tikzset{edge/.style = {-> ,> = triangle 60}}
\node[vertex] (a) at  (3,3) {$a$};
\node[vertex] (b) at  (3,0) {$b$};
\node[vertex] (c1) at  (1,-2) {$c_1$};
\node[vertex] (c2) at  (5,-2) {$c_2$};
\node[vertex] (d1) at  (1,-5) {$d_1$};
\node[vertex] (d2) at  (5,-5) {$d_2$};
\node[vertex] (e) at  (3,-3) {$e$};

\draw[edge] (a) to (b);

\tikzset{edge/.style = {-> ,> = open triangle 60 }}

\draw[edge] (d1) to (c1);
\draw[edge] (d2) to (c2);

\tikzset{edge/.style = {- ,> = latex'}}

\draw[edge] (b) to (c1);
\draw[edge] (a) to (c1);

\draw[edge] (b) to (c2);
\draw[edge] (a) to (c2);

\draw[edge] (e) to (d1);
\draw[edge] (e) to (d2);
\draw[edge] (e) to (c1);
\draw[edge] (e) to (c2);

\draw[edge] (a) to[bend right=50] (d1);
\draw[edge] (a) to[bend left=50] (d2);



\end{tikzpicture}
\end{minipage}


\begin{minipage}{.45\textwidth}
\begin{tikzpicture}[scale=0.4]

\tikzstyle{every node}=[font=\small]

\tikzset{vertex/.style = {shape=circle,draw,inner sep=0 pt, minimum size=15 pt}}
\tikzset{edge/.style = {-> ,> = triangle 60}}
\node[vertex] (a) at  (3,3) {$a$};
\node[vertex] (b) at  (3,0) {$b$};
\node[vertex] (c1) at  (0.5,-2) {$c_1$};
\node[vertex] (c2) at  (5.5,-2) {$c_2$};
\node[vertex] (d) at  (3,-5) {$d$};
\node[vertex] (e1) at  (2.2,-2.5) {$e_1$};
\node[vertex] (e2) at  (3.8,-2.5) {$e_2$};
\draw[edge] (a) to (b);

\tikzset{edge/.style = {-> ,> = open triangle 60 }}

\draw[edge] (d) to (c1);
\draw[edge] (d) to (c2);

\tikzset{edge/.style = {- ,> = latex'}}

\draw[edge] (b) to (c1);
\draw[edge] (a) to (c1);

\draw[edge] (b) to (c2);
\draw[edge] (a) to (c2);

\draw[edge] (e1) to (c1);
\draw[edge] (e2) to (c2);
\draw[edge] (d) to (e1);
\draw[edge] (d) to (e2);

\draw[edge] (a) to[bend right=90] (d);



\end{tikzpicture}

\end{minipage}
\begin{minipage}{.4\textwidth}
\begin{tikzpicture}[scale=0.4]

\tikzstyle{every node}=[font=\small]

\tikzset{vertex/.style = {shape=circle,draw,inner sep=0 pt, minimum size=15 pt}}
\tikzset{edge/.style = {-> ,> = triangle 60}}
\node[vertex] (a) at  (3,3) {$a$};
\node[vertex] (b) at  (3,0) {$b$};
\node[vertex] (c1) at  (1,-2) {$c_1$};
\node[vertex] (c2) at  (5,-2) {$c_2$};
\node[vertex] (d1) at  (1,-5) {$d_1$};
\node[vertex] (d2) at  (5,-5) {$d_2$};
\node[vertex] (e1) at  (2.2,-3) {$e_1$};
\node[vertex] (e2) at  (3.8,-3) {$e_2$};

\draw[edge] (a) to (b);

\tikzset{edge/.style = {-> ,> = open triangle 60 }}

\draw[edge] (d1) to (c1);
\draw[edge] (d2) to (c2);

\tikzset{edge/.style = {- ,> = latex'}}

\draw[edge] (b) to (c1);
\draw[edge] (a) to (c1);

\draw[edge] (b) to (c2);
\draw[edge] (a) to (c2);

\draw[edge] (e1) to (d1);
\draw[edge] (e2) to (d2);
\draw[edge] (e1) to (c1);
\draw[edge] (e2) to (c2);

\draw[edge] (a) to[bend right=50] (d1);
\draw[edge] (a) to[bend left=50] (d2);



\end{tikzpicture}
\end{minipage}

The following result is immediate. 

\begin{lemma}
\label{criticallemma}
Let $X \sim X(n,p)$. Then,
\[Pr \{ X \in N \} \leq O(\frac{1}{n^5}),\]
and
\[Pr \{ X \in N^\prime \} \leq O(\frac{1}{n}).\]
\end{lemma}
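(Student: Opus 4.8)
The plan is to prove both bounds by the first moment method (Markov's inequality), reusing the vertex/edge bookkeeping already carried out in the proof of Lemma~\ref{geometriclemma1}. The key observation is that a \emph{Critical Complex} is exactly a minimal arrangement in the sense of that lemma, specialized to $k=3$. Hence each of the finitely many Critical Complexes depicted satisfies $f_1 - f_0 = 3k-4 = 5$, where $f_0$ and $f_1$ denote its number of vertices and edges respectively. This single identity is what drives the whole computation.

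For the first bound, I would fix a Critical Complex $H$ with $f_0$ vertices and $f_1$ edges and estimate its expected number of labelled copies in $X \sim X(n,c/n)$. Up to a constant depending only on $H$ (its automorphisms), this expectation is
\[
\binom{n}{f_0}\pth{\frac{c}{n}}^{f_1} = O\pth{n^{f_0 - f_1}} = O\pth{\frac{1}{n^5}}.
\]
Summing over the finitely many isomorphism types of Critical Complexes keeps the bound at $O(1/n^5)$, and since the event $X\in N$ is precisely that this count is at least one, Markov's inequality gives $Pr\{X\in N\}\le \E(\text{number of Critical Complexes in }X) = O(1/n^5)$. For the second bound, I would note that a \emph{Precritical Complex} is a Critical Complex with exactly four edges deleted, so it has $f_0$ vertices and $f_1 - 4$ edges, whence $(f_1-4) - f_0 = 1$. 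The identical first moment computation yields expected count
\[
\binom{n}{f_0}\pth{\frac{c}{n}}^{f_1-4} = O\pth{n^{f_0 - f_1 + 4}} = O\pth{\frac{1}{n}},
\]
and as there are only $O(1)$ choices of which four edges to remove across the $O(1)$ Critical Complexes, the total expected number of Precritical Complexes stays $O(1/n)$; Markov again gives $Pr\{X\in N'\} = O(1/n)$.

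I do not expect a serious obstacle here — indeed the statement is flagged as immediate — and the only point requiring care is the worst-case bookkeeping. I would need to confirm that every Critical Complex type genuinely attains $f_1 - f_0 = 5$, so that no type contributes a smaller exponent of $n$ that would dominate, and, for the Precritical case, that deleting four edges never produces a configuration with a larger effective value of $f_0 - f_1$. Since deleting an edge can only isolate vertices, and isolating a vertex strictly decreases the exponent $f_0 - f_1$ (equivalently lowers the probability), the dominant and hence worst case is the one in which no vertex is isolated, giving exactly the claimed $O(1/n)$. This verification is the only subtlety, and it is minor.
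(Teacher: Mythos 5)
Your proposal is correct and is exactly the argument the paper intends: the paper calls the lemma ``immediate'' because it inherits the first-moment count $\binom{n}{f_0}(c/n)^{f_1}$ together with Markov's inequality directly from the proof of Lemma~\ref{geometriclemma1}, which is precisely what you do, including the edge-deficit bookkeeping ($f_1-f_0\geq 3k-4=5$ for $k=3$, and an excess of $f_1-4-f_0\geq 1$ edges over vertices for precritical complexes). One small correction: not every depicted Critical Complex has $f_1-f_0$ exactly $5$ (the second one has $f_1-f_0=6$); the property that matters is the inequality $f_1-f_0\geq 5$, and types with strict inequality only contribute smaller terms, so both upper bounds stand as you computed them.
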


Now define stopping time $\tau$ on $\{\F_i\}_{i=0}^{m}$ such that $\tau=t$ if $t=\min_{(s \leq t)}\{\F_s \in N^\prime\}$. Define a stopped martingale with respect to $\{\F_i\}_{i=0}^{m}$ by $M_i := Y_{i \wedge \tau}$. 

We first prove the following theorem.

\begin{theorem}\label{SMI}(Stopped Martingale inequality) Let $ \{M_i\}_{i=0}^{m} $ be the stopped martingale defined as above. Then for any $s \geq 0$ we have,

 \[Pr\{\abs{M_m-M_0}\geq s\cdot n^{2/3}\}\leq 2\exp\pth{-\frac{s^2\cdot n^{1/3}}{(c4^{t+1}  + (2/3)sn^{-1/3}2^{t+1})+O(1/n)}}.\]

\end{theorem}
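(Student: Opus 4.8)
The plan is to apply a Freedman-type martingale tail inequality, in the refined form due to Warnke~\cite{warnke_2016}, to the stopped martingale $\{M_i\}_{i=0}^m$. For this it suffices to produce two ingredients: an almost-sure bound $R$ on the increments $|M_i-M_{i-1}|$, and a bound $V$ on the predictable quadratic variation $\sum_{i=1}^m \E[(M_i-M_{i-1})^2\mid \F_{i-1}]$. Freedman's inequality then yields
\[
Pr\{|M_m-M_0|\geq \lambda\}\leq 2\exp\left(-\frac{\lambda^2}{2V+\tfrac{2}{3}R\lambda}\right),
\]
and substituting $\lambda=s\,n^{2/3}$ reproduces the claimed bound once $R$ and $V$ take the values computed below.

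First I would establish the increment bound $R=2^{t+1}$. The increment $M_i-M_{i-1}$ vanishes for $i>\tau$, since the process is frozen after $\tau$; for $i\leq\tau$ the exposed edges $e_1,\dots,e_{i-1}$ contain no precritical complex, hence (as $N'\supseteq N$) no critical complex. Writing the Doob increment as $(1-p)(g_1-g_0)$ or $-p(g_1-g_0)$ according to $e_i\in\{1,0\}$, where $g_b=\E[\,|f_0(R_t(X))|\mid \F_{i-1},e_i=b\,]$, the quantity $|g_1-g_0|$ is the averaged influence of the single edge $e_i$. By the Corollary following Lemma~\ref{geometriclemma1}, an influence of $2^{t+1}$ or more forces a dominated vertex whose deletion spawns three newly generated dominated vertices, i.e.\ a critical complex; the stopping rule precludes this on $\{\tau\geq i\}$. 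Hence $|M_i-M_{i-1}|\leq 2^{t+1}$ almost surely.

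Next I would bound the predictable quadratic variation. A direct computation gives, on $\{\tau\geq i\}$,
\[
\E[(M_i-M_{i-1})^2\mid \F_{i-1}]=p(1-p)\,(g_1-g_0)^2\leq p\,(2^{t+1})^2=\tfrac{c}{n}\,4^{t+1},
\]
and $0$ otherwise. Summing over the $m=\binom{n}{2}$ edges yields $V\leq \tfrac{c}{n}4^{t+1}\cdot\tfrac{n(n-1)}{2}=\tfrac{c}{2}4^{t+1}n+O(1)$. Feeding $R=2^{t+1}$ and this $V$ into Freedman's bound with $\lambda=s\,n^{2/3}$, and dividing numerator and denominator by $n$, the numerator becomes $s^2 n^{1/3}$ while the denominator becomes $2V/n+\tfrac{2}{3}2^{t+1}s\,n^{-1/3}=c4^{t+1}+\tfrac{2}{3}s\,n^{-1/3}2^{t+1}+O(1/n)$, which is exactly the stated inequality (the $O(1/n)$ arising from $2V/n=c4^{t+1}+O(1/n)$).

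The step I expect to be the main obstacle is the rigorous justification of the almost-sure increment bound $R=2^{t+1}$. The delicate point is that the Doob increment compares conditional expectations $g_0,g_1$ averaged over the not-yet-exposed edges $e_{i+1},\dots,e_m$, whereas the Corollary controls the influence of $e_i$ only on a fully specified configuration and only when no critical complex is present. A critical complex could, in principle, be completed by the later edges even though the exposed prefix is precritical-free; this is precisely why a naive worst-case bounded-difference argument fails and the stopping construction (together with the precritical/critical distinction and Warnke's path-variance refinement, which legitimizes using the \emph{typical} increment bound $2^{t+1}$) is introduced. I would therefore verify carefully that freezing the martingale at the first precritical complex confines every surviving increment to within $2^{t+1}$, with the negligible residual probability of ever encountering a precritical complex deferred to the Main Theorem~\ref{maintheorem}. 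The variance computation and the final substitution are then routine.
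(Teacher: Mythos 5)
Your overall architecture matches the paper's: freeze the martingale at the first precritical complex, bound the increments and the predictable variance, and apply Warnke's Freedman-type inequality with $\lambda = s\,n^{2/3}$; the variance computation and the final substitution are also essentially identical. But there is a genuine gap at exactly the step you yourself flag as the main obstacle, and your proposed resolution of it is wrong. You claim that on $\{\tau \geq i\}$ the increment satisfies $\abs{M_i - M_{i-1}} \leq 2^{t+1}$ almost surely, arguing that an influence of $2^{t+1}$ or more forces a critical complex, which ``the stopping rule precludes.'' It does not: $\tau$ is determined by the exposed prefix $e_1,\dots,e_{i-1}$ only, while the Doob increment $\abs{g_1-g_0}$ averages the influence of $e_i$ over the unexposed edges $e_{i+1},\dots,e_m$, and those future edges can complete a critical complex even when the prefix is precritical-free. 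On such configurations the realized influence can be as large as $n$, so no almost-sure bound of $2^{t+1}$ holds; the correct statement, and the one the paper proves, is $\abs{M_i-M_{i-1}} \leq 2^{t+1} + O(1/n)$.

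Closing this gap is the actual technical content of the paper's proof, and it is absent from yours: one must show that \emph{conditioned on a precritical-free prefix}, the probability that the full complex contains a critical complex is $O(1/n^2)$. The paper does this by splitting $N = M \sqcup M^{\prime}$, where $M$ consists of complexes containing a critical complex disjoint from the exposed edges and $M^{\prime}$ of those whose critical complexes all use exposed edges; $Pr\{X \in M\} = O(1/n^5)$ unconditionally (as in Lemma~\ref{geometriclemma1}), while for $M^{\prime}$ the definition of precritical (a critical complex with any four edges removed) guarantees that at least five specific edges on at least three further vertices are still needed, giving a conditional probability $O(n^3 p^5) = O(1/n^2)$. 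Multiplying by the worst-case influence $n$ then yields the conditional expected-influence bound $2^{t+1} + n\cdot O(1/n^2) = 2^{t+1} + O(1/n)$, which is why the $O(1/n)$ terms appear in the theorem's denominator (they do not arise only from the variance sum, as your final substitution suggests). Note that the unconditional estimate $Pr\{X \in N^{\prime}\} = O(1/n)$ of Lemma~\ref{criticallemma} cannot substitute for this: it controls the whole complex, not the conditional law at an intermediate step of the exposure, and that distinction is precisely what the precritical notion was introduced to handle. Without this conditional estimate your increment bound is unsupported and the application of Warnke's lemma does not go through as stated.
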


In order to prove the above theorem, we shall use the following lemma from Warnke~\cite{warnke_2016}. 
Assume that $\{\F_K\}_{0\leq k \leq N}$ is an increasing sequence of $\sigma$-algebras, and  $\{M_K\}_{0\leq k \leq N}$ is an $\{\F_K\}_{0\leq k \leq N}$-adapted bounded martingale.



\begin{lemma}
\label{BVMI}
(2-sided version of Bounded Variance martingale Inequality) Let $U_k$ be a $F_{k-1}$ variable satisfying $\abs{M_k-M_{k-1}}\leq U_k$. Set $C_k=\max_{i\in [k]}U_k$ and $V_k=\sum_{i\in [k]}V(M_i-M_{i-1}|F_{i-1})$. Let $\phi(x)=(1+x)\log(1+x)-x$. For every $s\geq 0$ and $V,C > 0$ we have \[ Pr\{\abs{M_K - M_0} \geq s, V_k\leq V,  C_k \leq C \: for \: some \: k \in [N] \}\leq 2e^{-s^2/(2V+2Cs/3)}\]
\end{lemma}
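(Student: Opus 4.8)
The plan is to derive this two-sided inequality directly from the one-sided version of Warnke's bound~\cite{warnke_2016} by applying it to both $\{M_k\}$ and $\{-M_k\}$, combining via a union bound, and then converting the Bennett-type exponent involving $\phi$ into the stated Bernstein-type form through an elementary analytic inequality. Since the lemma is attributed to Warnke, the work is entirely in the reduction.

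First I would recall the one-sided inequality of Warnke, which under the same hypotheses—$\abs{M_k - M_{k-1}} \leq U_k$ with $U_k$ being $\F_{k-1}$-measurable, $C_k = \max_{i \in [k]} U_i$, and $V_k = \sum_{i \in [k]} V(M_i - M_{i-1} \mid \F_{i-1})$—gives
\[
Pr\{M_k - M_0 \geq s,\ V_k \leq V,\ C_k \leq C \text{ for some } k \in [N]\} \leq \exp\pth{-\frac{V}{C^2}\,\phi\pth{\frac{Cs}{V}}},
\]
where $\phi(x) = (1+x)\log(1+x) - x$. Next I would observe that $\{-M_k\}$ is also an $\{\F_k\}$-adapted martingale with the identical predictable difference bound $\abs{(-M_k) - (-M_{k-1})} = \abs{M_k - M_{k-1}} \leq U_k$, hence the same $C_k$, and with the same conditional-variance sum $V_k$, since $V(-(M_i - M_{i-1}) \mid \F_{i-1}) = V(M_i - M_{i-1} \mid \F_{i-1})$. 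Applying the one-sided bound to $\{-M_k\}$ therefore controls the lower tail $M_k - M_0 \leq -s$ with exactly the same right-hand side, and a union bound over the two events produces the factor $2$, bounding $Pr\{\abs{M_k - M_0} \geq s,\ V_k \leq V,\ C_k \leq C \text{ for some } k\}$ by $2\exp\pth{-\tfrac{V}{C^2}\phi\pth{\tfrac{Cs}{V}}}$.

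Finally I would invoke the elementary inequality $\phi(x) \geq \frac{x^2}{2 + 2x/3}$, valid for all $x \geq 0$ (a standard convexity estimate, provable by comparing the two sides at $x=0$ and their derivatives). Substituting $x = Cs/V$ yields
\[
\frac{V}{C^2}\,\phi\pth{\frac{Cs}{V}} \geq \frac{V}{C^2}\cdot\frac{(Cs/V)^2}{2 + (2/3)(Cs/V)} = \frac{s^2}{2V + 2Cs/3},
\]
so that $2\exp\pth{-\tfrac{V}{C^2}\phi\pth{\tfrac{Cs}{V}}} \leq 2e^{-s^2/(2V + 2Cs/3)}$, which is the claimed bound.

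The only genuinely delicate point is verifying that all three controlling quantities $U_k$, $C_k$, $V_k$ are truly preserved when passing from $M$ to $-M$; this is immediate for $U_k$ and $C_k$, and for $V_k$ it rests on the sign-invariance of conditional variance. Everything else—the union bound and the $\phi$ inequality—is routine, so I do not anticipate a substantive obstacle beyond stating Warnke's one-sided bound in precisely the form above.
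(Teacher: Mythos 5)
Your proposal is correct, but note that the paper does not actually prove this lemma: it is imported verbatim from Warnke~\cite{warnke_2016} and used as a black box, so your derivation supplies a reduction that the paper leaves entirely implicit. The three ingredients you use are all sound: the hypotheses are preserved under $M \mapsto -M$ (the two-sided increment bound $\abs{M_k-M_{k-1}}\leq U_k$ and the conditional variances are sign-invariant); the union bound respects the ``for some $k$'' quantifier, since any index $k$ witnessing $\abs{M_k-M_0}\geq s$ together with $V_k \leq V$ and $C_k\leq C$ witnesses one of the two one-sided events at that same index; and the Bennett-to-Bernstein conversion $\phi(x)\geq x^2/(2+2x/3)$ for $x\geq 0$ is the classical estimate (both sides and their first derivatives vanish at $0$, and comparing second derivatives reduces to $(x+3)^3 \geq 27(1+x)$, i.e.\ $x^3+9x^2\geq 0$). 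Your reconstruction also explains a cosmetic oddity of the paper's statement: it defines $\phi$ but never uses it in the displayed bound, which is exactly what one would expect if a Bennett-form original was transcribed and then weakened to the Bernstein form. The only caution is bibliographic rather than mathematical: Warnke states several closely related inequalities (one-sided and two-sided, in both exponent forms), so depending on which of them you take as your starting point, part of your reduction may already be subsumed in the citation; in any case, every step you perform is valid, and your argument is a complete and correct proof of the stated two-sided bound given the one-sided Bennett-form input.
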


Theorem~\ref{maintheorem} essentially follows from Theorem \ref{SMI} and Lemma \ref{criticallemma}.

\begin{proof}\textit{Proof of Theorem \ref{SMI}}

Let $X \in X(n,p)$. We shall first try to calculate $Pr\{X \in N| e_i, \cdots , e_1\}$ where $(e_1, \cdots , e_i)$ does not form any precritical complex. Let $M \subset N$ be the set of complexes that contains some critical complex \textit{not} involving any of the edges from $\{\ei\}$ and $M^{\prime} \subset N$ be the set of complexes where all the critical complexes involves some edges from  $\{\ei\}$. Clearly, $N = M \sqcup M^{\prime}$. Thus,

\begin{flalign*}
Pr\{X \in N| \ei\}
&= Pr\{X \in M \sqcup M^{\prime}| \ei\} &\\
&= Pr\{ X \in M  |\ei \} + Pr\{ X \in M^{\prime}| \ei \}&\\
&= Pr\{ X \in M  \} + Pr\{  X \in M^{\prime}| \ei \}
\end{flalign*}

$Pr\{ X \in M\}$ is the probability that $\{ \m \}$ contains a critical complex. By reasoning similar to the proof of Lemma \ref{geometriclemma1}. We get $Pr\{X \in M\}=O(1/n^5)$. On the other hand, note that as $(\ei)$ does not contain any precrtitical complex, atleast $5$ more edges is needed for $X$ to form a critical complex involving  some edges $\{\ei \}$. Suppose, depending on  $(e_1, \cdots , e_i)$, $k$ more edges are needed to complete a critical complex. Clearly $5 \leq k \leq 13$. Also note that, in a critical complex, there are atmost $2$ vertices of degree two and rests have degree atleat $3$. Thus even in the worst case one need to choose $3$ vertices and construct $5$ particular edges.  Thus $Pr\{  X \in M^{\prime}| \ei \}=O(1/n^2)$. Hence, $Pr\{X \in N| e_i, \cdots , e_1\}=O(1/n^2)$ given $(e_1, \cdots , e_i)$ does not form any precritical complex. In particular, \[Pr \{ \abs{ f_{0}(R_{t}(X)) \setminus f_{0}(R_{t}(X \setminus \{e_{i+1}\}))  } \geq 2^{t+1}| \ei \} \leq O(\frac{1}{n^2}) \]  under the same assumption.

Thus  \[\E( \abs{ f_{0}(R_{t}(X)) \setminus f_{0}(R_{t}(X \setminus \{e_{i+1}\})) } | \ei    ) \leq 2^{t+1} + n\cdot O(1/n^2) \leq 2^{t+1} + O(1/n)  \] whenever $(\ei)$ does not contain any precrtitical complex.

We now claim that $\abs{M_{i+1}-M_i}\leq 2^{t+1}+ O(1/n)$. If $(\ei)$ contains a precritical complex then the martingale stops and the claim holds. Now suppose $(\ei)$ does not contain any precritical complex. Then 

\begin{flalign*}
\abs{[M_{i+1}-M_i](1, \ei)}
&=\abs{M_{i+1}(1, \ei )-\E_{e_{i+1}}[M_{i+1}]}&\\
&=\abs{M_{i+1}(1, \ei )-p\cdot(M_{i+1}(1, \ei))-(1-p)\cdot(M_{i+1}(0, \ei))}&\\
&=\abs{(1-p)\cdot(M_{i+1}(1, \ei )-M_{i+1}(0, \ei))}&\\
&\leq (1-p)(2^{t+1} + O(1/n))
\end{flalign*}

Similarly,

\begin{flalign*}
\abs{[M_{i+1}-M_i](0, \ei)}
&=\abs{M_{i+1}(0, \ei )-\E_{e_{i+1}}[M_{i+1}]}&\\
&=\abs{M_{i+1}(0, \ei )-p\cdot(M_{i+1}(1, \ei))-(1-p)\cdot(M_{i+1}(0, \ei))}&\\
&=\abs{p\cdot(M_{i+1}(0, \ei )-M_{i+1}(1, \ei))}&\\
& \leq p(2^{t+1} + O(1/n) )
\end{flalign*}

 Hence the claim follows.

Next we claim that $Var(M_{i+1}-M_i|\F_i)=Var(M_{i+1}|\F_i)\leq (c/n)(4^{t+1} + O(1/n)) \leq O(1/n)$. Indeed if $(\ei)$ contains a precritical complex then the martingale stops and the variance is zero. Otherwise

\begin{flalign*}
Var(M_{i+1}-M_i|\ei)
&= p\cdot ([M_{i+1}-M_i](1, \ei))^2 +(1-p)\cdot ([M_{i+1}-M_i](0, \ei))^2&\\
& \leq p(1-p)^2(2^{t+1} + O(1/n))^2 + (1-p)p^2(2^{t+1} + O(1/n) )^2&\\
& \leq p(1-p)(4^{t+1} + O(1/n))&\\
& \leq (c/n)(4^{t+1} + O(1/n))
\end{flalign*}

Thus by Lemma \ref{BVMI},

\begin{flalign*}
Pr\{\abs{M_m-M_0}\geq s\cdot n^{\frac{2}{3}}\}
& \leq 2 \exp(-\frac{s^2\cdot n^{\frac{4}{3}}}{(n^2-n) \cdot  (c/n)(4^{t+1} + O(1/n)) + 2/3\cdot (2^{t+1}+O(1/n))\cdot s \cdot n}) &\\
& \leq 2\exp(-\frac{s^2\cdot n^{\frac{4}{3}}}{(n-1)c4^{t+1}  + (2/3)sn^{2/3}2^{t+1}+O(1)}) &\\
& \leq 2\exp(-\frac{s^2\cdot n^{\frac{4}{3}}}{nc4^{t+1}  + (2/3)sn^{2/3}2^{t+1}+O(1)}) &\\
& \leq 2\exp(-\frac{s^2\cdot n^{\frac{4}{3}}}{n(c4^{t+1}  + (2/3)sn^{-1/3}2^{t+1})+O(1)}) &\\
& \leq 2\exp(-\frac{s^2\cdot n^{\frac{1}{3}}}{(c4^{t+1}  + (2/3)sn^{-1/3}2^{t+1})+O(1/n)}) 
\end{flalign*}

\end{proof}

\begin{proof}\textit{Proof of Theorem \ref{maintheorem}} Theorem~\ref{maintheorem} follows from Theorem \ref{SMI} and Lemma \ref{criticallemma} via the following inequalities.
\begin{flalign*}
Pr\{\abs{Y_m-Y_0}\geq s\} 
&=Pr\{\abs{Y_m-Y_0}\geq s \: and \:  \, \neg N^{\prime}\} + Pr\{\abs{Y_m-Y_0} \geq s \: and \: N^{\prime}\} &\\
& \leq Pr\{\abs{M_m-M_0}\geq s\} + Pr\{N^{\prime}\} 
\end{flalign*}
 Thus
\[Pr\{\abs{Y_m-Y_0}\geq s \cdot n^{\frac{2}{3}}\}\leq 2\exp\pth{-\frac{s^2\cdot n^{\frac{1}{3}}}{(c4^{t+1}  + (2/3)sn^{-1/3}2^{t+1})+O(1/n)}}+O(1/n)=o_n(1).\]
\end{proof}

Now set $X_0^\prime := \abs{f_0(R_{t}(X))}-\abs{f_0(R_{t+1}(X))}$.

\begin{lemma}
\label{conc}
For any $s>0$, $Pr\{\abs{X_0^\prime-\E[X_0^\prime]}>sn^{\frac{2}{3}}\}<o_n(1).$
\end{lemma}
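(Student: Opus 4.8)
The plan is to deduce the concentration of $X_0^\prime$ from the concentration of its two constituent terms, each of which is already controlled by the Main Theorem (Theorem~\ref{maintheorem}). First I would write $X_0^\prime = \abs{f_0(R_t(X))} - \abs{f_0(R_{t+1}(X))}$, so that by linearity $\E[X_0^\prime] = \E\abs{f_0(R_t(X))} - \E\abs{f_0(R_{t+1}(X))}$, and then invoke the triangle inequality to get
\[
\abs{X_0^\prime - \E[X_0^\prime]} \leq \abs{\abs{f_0(R_t(X))} - \E\abs{f_0(R_t(X))}} + \abs{\abs{f_0(R_{t+1}(X))} - \E\abs{f_0(R_{t+1}(X))}}.
\]

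Next I would apply Theorem~\ref{maintheorem} twice --- once with the parameter $t$ and once with $t+1$ --- which is legitimate because that theorem holds for every fixed number of pruning phases (the constants $4^{t+1}$ and $2^{t+1}$ in the exponent simply become $4^{t+2}$ and $2^{t+2}$). Each application gives that the corresponding deviation exceeds $\tfrac{s}{2} n^{2/3}$ with probability $o_n(1)$. A union bound over the two events then yields
\[
Pr\{\abs{X_0^\prime - \E[X_0^\prime]} > s n^{2/3}\} \leq Pr\left\{\abs{\abs{f_0(R_t(X))} - \E\abs{f_0(R_t(X))}} > \tfrac{s}{2}n^{2/3}\right\} + Pr\left\{\abs{\abs{f_0(R_{t+1}(X))} - \E\abs{f_0(R_{t+1}(X))}} > \tfrac{s}{2}n^{2/3}\right\} = o_n(1),
\]
which is exactly the claimed bound.

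An alternative, more self-contained route would be to repeat the edge-exposure martingale argument of Theorem~\ref{SMI} directly for the variable $X_0^\prime$. Here the bounded-difference estimate is the key input: I would argue that deleting a single edge $e$ changes $\abs{f_0(R_t(X))}$ by at most $2^{t+1}$ and $\abs{f_0(R_{t+1}(X))}$ by at most $2^{t+2}$ outside the set of precritical complexes, so that the influence on the difference $X_0^\prime$ is at most $3 \cdot 2^{t+1}$. The same stopped-martingale construction with respect to the precritical stopping time $\tau$, followed by Warnke's inequality (Lemma~\ref{BVMI}), then applies verbatim with this enlarged bound, and the term $Pr\{N^\prime\} = O(1/n)$ from Lemma~\ref{criticallemma} accounts for the stopped event.

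The main obstacle is not any new phenomenon but ensuring the reusability of the earlier machinery. In the union-bound route, the only point to check carefully is that the hypotheses of Theorem~\ref{maintheorem} are genuinely satisfied for both $t$ and $t+1$ simultaneously; since the critical/precritical complex analysis of Section~\ref{sec:concentration} is uniform in the round index, this holds. In the direct-martingale route, the subtlety is that the influence and variance bounds must be re-derived for $X_0^\prime$ rather than for a single $\abs{f_0(R_t(X))}$, but these follow from the corollary bounding the number of newly generated dominated vertices, now applied at both levels $t$ and $t+1$. Either way the proof is short once the groundwork of Theorem~\ref{maintheorem} is in place, and I would favor the union-bound route for brevity.
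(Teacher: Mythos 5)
Your union-bound route is exactly the paper's own proof: the paper likewise writes $X_0^\prime$ as the difference of the two vertex counts, applies the triangle inequality to split the deviation, and then uses a union bound together with two invocations of Theorem~\ref{maintheorem} at consecutive round indices, each contributing $o_n(1)$. Your version is in fact slightly cleaner, since you use the indices $t$ and $t+1$ consistently with the definition of $X_0^\prime$, whereas the paper's displayed implication uses $t-1$ and $t$.
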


\begin{proof}[Proof of Lemma~\ref{conc}]
Observe that
\begin{flalign*}
\abs{X_0^\prime-\E[X_0^\prime]}>t 
&\implies \abs{(\abs{f_0(R_{t-1}(X))}-\abs{f_0(R_{t}(X))})-(\E[\abs{f_0(R_{t-1}(X))}]-\E[\abs{f_0(R_{t}(X))}])}>t\\
&\implies \abs{\abs{f_0(R_t(X)}-\E[\abs{f_0(R_t(X))}]}+\abs{\abs{f_0(R_{t-1}(X)}-\E[\abs{f_0(R_{t-1}(X))}]}>t\\
& \implies  \abs{\abs{f_0(R_t(X)}-\E[\abs{f_0(R_t(X))}]}>t/2 \quad \text{or} \quad   \abs{\abs{f_0(R_{t-1}(X)}-\E[\abs{f_0(R_{t-1}(X))}]}>t/2
\end{flalign*}
Hence, by union bound, \[Pr\{\abs{X_0^\prime-\E[X_0^\prime]}>sn^{\frac{2}{3}} \}\] \[\leq Pr\{\abs{\abs{f_0(R_{t}(X)}-\E[\abs{f_0(R_{t}(X))}]}>(s/2)n^{\frac{2}{3}}\} + Pr\{\abs{\abs{f_0(R_{t-1}(X)}-\E[\abs{f_0(R_{t-1}(X))}]}>(s/2)n^{\frac{2}{3}}\}\leq o_n(1)\]

\end{proof}

Let $X_0$ be the random variable that denotes the number of dominated vertices at the end of the first epoch. Clearly $0 \leq X_0 \leq \abs{f_0(R_{t}(X))}-\abs{f_0(R_{t+1}(X))}=X_0^\prime$. As $X_0^\prime \leq \E[X_0^\prime] + o(n)$ a.a.s. we get that $0 \leq X_0
\leq  \E[\abs{f_0(R_{t}(X))}]-\E[\abs{f_0(R_{t+1}(X))}] + o(n)$ a.a.s.

\section{Second Epoch}
\label{sec:second-epoch}
The second epoch will be a slower version of the first epoch. Here a dominated vertex is chosen uniformly randomly and is removed. The process continues until there is no more dominated vertices.
Similar to the proof of~\cite{DBLP:journals/rsa/AronshtamL16}, our strategy shall be to show that when a dominated vertex is deleted, the expected number of newly created dominated vertices is 
strictly less than $1$, so that within $o(n)$ steps, the strong collapse process comes to a halt. Thus the size of the core will be -- up to a $o(n)$-factor -- the number of vertices remaining after the 
first epoch. 


Let after $t$ pruning phases the first epoch ends and the second epoch begins. Also, Let $Y_i$ be the random variable that denotes number of newly generated dominated vertices solely by the deletion of the dominated vertex at the $i$-th step of the second epoch. Note that $Y_i \in \{0,\cdots,n\}$. 
First we try to calculate $Pr\{Y_i=1\}$.

\begin{lemma}
\label{drift}
For any $i\leq \frac{1}{12}n(1-\gamma)(1-c\gamma)$,we have $\E[Y_i]\leq 1 -\frac{3}{4}(1-c\gamma) + o_n(1)< 1+o_n(1)$.
\end{lemma}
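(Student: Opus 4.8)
The plan is to reduce the computation of $\E[Y_i]$ to the single dominant event---that deleting a dominated vertex produces exactly one new dominated vertex---and to discard all higher cascades. Writing $\E[Y_i]=\sum_{k\ge 1}k\cdot Pr\{Y_i=k\}$, I would first show the tail is irrelevant. If $b$ is the deleted dominated vertex, then (Lemma~\ref{nbd_contenment}) producing $k$ new dominated vertices forces $b$ to have at least $k$ neighbours together with the triangle-rich local configurations of Lemma~\ref{geometriclemma1}; by the counting behind that lemma and its corollaries, $Pr\{Y_i=k\}=O(1/n^{3k-4})$, and this persists in the partially collapsed complex because its constant-radius neighbourhoods are still one-dimensional trees a.a.s.\ (Lemma~\ref{l:simp-compl-local-tree}, which remains valid since only $i=O(n)$ vertices have been removed). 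Hence $\sum_{k\ge 2}k\cdot Pr\{Y_i=k\}=O(1/n^{2})=o_n(1)$, so $\E[Y_i]=Pr\{Y_i=1\}+o_n(1)$ and it suffices to estimate $Pr\{Y_i=1\}$.

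Next I would pin down the local picture. In a flag complex a degree-$1$ vertex (a leaf) is always dominated by its unique neighbour, whereas a vertex of degree $\ge 2$ is dominated only when its neighbours fill out a triangle with a common dominator; since the expected number of triangles in $X(n,c/n)$ is $O(1)$ and deletions create no new edges, a.a.s.\ every dominated vertex encountered in the second epoch is a leaf. Deleting such a leaf $b$ alters only the closed neighbourhood of its unique neighbour $a$, so $Y_i\le 1$ up to the negligible triangle-events, and $Y_i=1$ precisely when $a$ itself becomes a new leaf, i.e.\ when $\deg(a)=2$ before the deletion. Therefore, to leading order,
\[
\E[Y_i]=Pr\{\deg(a)=2\}+o_n(1),
\]
where $a$ is the neighbour of the uniformly random dominated (leaf) vertex $b$ deleted at step $i$.

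To evaluate $Pr\{\deg(a)=2\}$ at the onset of the second epoch I would use the tree approximation of Section~\ref{sec:tree}: the residual degree law after the $t$ pruning phases is governed by $\gamma_t^{k}=\frac{(c(1-\gamma_{t-1}))^{k}}{k!}e^{-c(1-\gamma_{t-1})}$, and following an edge from a leaf, a direct computation using $e^{-c(1-\gamma)}=\gamma$ and $f'(\gamma)=c f(\gamma)=c\gamma$ gives that this probability tends to $c\gamma$. This is exactly the reproduction number of the leaf-removal cascade, and it matches the asymptotic ratio of consecutive pruning decrements $\tfrac{\gamma_{t+1}-\gamma_t}{\gamma_t-\gamma_{t-1}}\to c\gamma$ established in Section~\ref{sec:rate-conv}; by Lemma~\ref{drift1} it is strictly below $1$, giving the expected sub-criticality of the unperturbed process.

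Finally I would control how this drift grows with $i$. Since exactly one vertex is removed per step, after $i\le\frac{1}{12}n(1-\gamma)(1-c\gamma)$ steps at most a $\tfrac{1}{12}$-fraction of the $(1-\gamma)(1-c\gamma)n+o(n)$ vertices present after the first epoch have been deleted; feeding this bound into the fluid-limit for the degree counts $d_k(i)$ (each leaf removal only shifts one neighbour down one degree class) shows that $Pr\{\deg(a)=2\}$ can exceed its initial value $c\gamma$ by at most $\tfrac14(1-c\gamma)$ over this range. Combining the three estimates yields
\[
\E[Y_i]\le c\gamma+\tfrac14(1-c\gamma)+o_n(1)=1-\tfrac34(1-c\gamma)+o_n(1)<1+o_n(1),
\]
which is the claim. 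The main obstacle is precisely this last step: because strong collapse is non-local, a single deletion could in principle change the domination status of many vertices, so one must verify both that the local tree model survives $\Theta(n)$ deletions and that the degree distribution drifts slowly enough---this is where the slack constants $\tfrac{1}{12}$ and $\tfrac34$ are spent and where the concentration estimates of Section~\ref{sec:concentration} are needed to rule out the rare non-local cascades.
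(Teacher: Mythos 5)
Your overall skeleton matches the paper's: discard the $k\ge 2$ tail via Lemma~\ref{geometriclemma1}, identify the leading term $\approx c\gamma$ coming from a previously undisturbed degree-$2$ neighbour, and spend the remaining budget of $\tfrac14(1-c\gamma)$ on the perturbation caused by the first $i-1$ deletions. However, there is a genuine gap at exactly the step you yourself flag as the main obstacle: you assert, via an unspecified ``fluid limit for the degree counts $d_k(i)$'', that $Pr\{\deg(a)=2\}$ can drift above $c\gamma$ by at most $\tfrac14(1-c\gamma)$ over $i\le\tfrac{1}{12}n(1-\gamma)(1-c\gamma)$ steps; the constant is never derived and no such analysis is carried out. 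The paper avoids tracking the evolving degree distribution altogether by a different device: it splits $\{Y_i=1\}$ according to whether the newly dominated vertex is \emph{affected for the first time} at step $i$ (event $Q_i$). On $Q_i$ the vertex retains its end-of-first-epoch structure, so the static tree-process estimate $c\gamma_{t+1}\le c\gamma$ applies; on $\overline{Q_i}$ the vertex must be one of the previously affected vertices, of which there are a.a.s.\ at most $3(i-1)$ --- because any step affecting four or more vertices forces a configuration (a dominated $u$ sharing three common neighbours with its dominating vertex $a$) whose expected count is $O(1/n^2)$ --- and the probability of hitting one of them is at most $\frac{3(i-1)}{n(1-\gamma_t)}\le\tfrac14(1-c\gamma)$. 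This dichotomy is the missing ingredient: any rigorous justification of your drift claim would either reproduce it or require a full Wormald-style differential-equation analysis, which is far more work than your sketch acknowledges.

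A secondary issue: the claim that ``a.a.s.\ every dominated vertex encountered in the second epoch is a leaf'' is too strong, since the number of triangles in $X(n,c/n)$ is asymptotically Poisson with constant mean, hence triangles are present with probability bounded away from $0$. What is true, and what the paper uses, is a per-step bound: by Markov, $Pr\{T>\log n\}\le O(1/\log n)$, and since collapses only delete simplices ($T'\le T$), the probability that the \emph{particular} vertex relevant at step $i$ lies in a $2$-simplex is $O(\log n)/((1-\gamma_t)n)$, which suffices for bounding $\E[Y_i]$ at a fixed $i$. The same monotonicity point (counts of bad configurations cannot increase under collapse) is also the correct justification for carrying the Lemma~\ref{geometriclemma1} bounds into the partially collapsed complex, rather than your vaguer appeal to local tree structure surviving $\Theta(n)$ deletions.
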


\begin{proof}[Proof of Lemma~\ref{drift}]

We shall say that a vertex is affected by $i$th collapse in the second epoch if its degree is changed by that collapsing step. Define $Q_i$ be the subset of the event $\{Y_i=1\}$ that the newly generated vertex by the $i$th collapse is affected for the first time. Clearly the event $\{\{Y_i=1\} \cap Q_i\}$ represents the fact that only one vertex, say $v$, is newly generated by the $i$th collapse (of vertex $u$) and $v$ is affected for the first time. Thus that particular vertex retains the local structure since the first epoch. The idea here is that the edge $\{u,v\}$ can be attached to any of the possible places after the first epoch ends. We are only calculating the probability that is is attached to a suitable vertex of $ R_t(X)\setminus \{\{u,v\},\{u\}\} $.  To calculate $Pr\{\{Y_i=1\} \cap Q_i\}$ we shall further partition it into two events. To this end, define $P$ be the event that $deg(v)=2$ after $i-1$ steps.  Thus the probability  $Pr\{\{Y_i=1\} \cap Q_i \cap P\}$ is the ratio of the numbers of degree one vertex in $R_t(X)\setminus \{\{u,v\},\{u\}\} $ to the number of non-isolated vertices after $t-1$ phase of the first epoch, as done in eq. 8 of \cite{DBLP:journals/rsa/AronshtamL16}. It can be shown, by using similar arguments like section \ref{sec:concentration}, that both these quantities are concentrated around their mean. These two quantities are, respectively, equal to $c(1-\gammat)\gamma_{t+1}n+o(n)$ and $(1-\gammat)n + o(n)$ a.a.s. 

For the event $\{Y_i=1\} \cap Q_i \cap \overline{P}$ to occur $v$ must be a part of a $2$-simplex. Now we shall bound the number of $2$-simplices remaining after the first epoch. Observe that during the collapsing phases number of $2$-simplices can only decrease. Let us define the random variables $T:=\abs{f_2(X)}$ and $T^\prime:=\abs{f_2(R_t(X))}$ for $X \sim X(n,c/n)$.  Clearly $T^\prime \leq T$. From Markov's inequality we get that $Pr\{T> \log(n)\} \leq O(1/\log(n))$. Thus,  $Pr\{T^\prime> \log(n)\} \leq O(1/\log(n))$. So $Pr\{\{Y_i=1\} \cap Q_i \cap \overline{P}\}\leq O(\log(n))/(1-\gammat)n$ a.a.s. Thus, a.a.s. $Pr\{\{Y_i=1\} \cap Q_i\}\leq  \frac{c(1-\gammat)\gamma_{t+1}n}{(1-\gammat)n} + O(\log(n))/(1-\gammat)n \leq \frac{c(1-\gammat)\gamma_{t+1}}{(1-\gammat)} + o_n(1)$.

Now we need to calculate $Pr\{\{Y_i=1\} \cap \overline{Q_i}\}$. To do this we shall again partition this event into two disjoint events. Let $A_i$ denote the number of affected vertices at $i$th step of the second epoch. Now define the event $B_i:= \bigcap_{j=1}^{i}\{A_j\leq 3\}$. So, $ Pr\{\{Y_i=1\} \cap \overline{Q_i} \cap B_{i-1}\}\leq \frac{3(i-1)}{n(1-\gammat)}$.

To calculate $ Pr\{\{Y_i=1\} \cap \overline{Q_i} \cap \overline{B_{i-1}}\}$ first observe that  $ Pr\{\{Y_i=1\} \cap \overline{Q_i} \cap \overline{B_{i-1}}\} \leq Pr\{ \overline{B_{i-1}}\} \leq \Sigma_{j=1}^{i-1}\{A_j \geq 4\}$ . But for $\{A_i \geq 4 \}$ to happen the corresponding dominated vertex $u$ must be a part of the following arrangement.

\begin{center}
\begin{tikzpicture}[scale=0.4]

\tikzstyle{every node}=[font=\small]

\tikzset{vertex/.style = {shape=circle,draw,inner sep=0 pt, minimum size=15 pt}}
\tikzset{edge/.style = {-> ,> = triangle 60}}
\node[vertex] (a) at  (3,3) {$a$};
\node[vertex] (u) at  (3,0) {$u$};
\node[vertex] (ck) at  (1,-2) {$c_k$};
\node[vertex] (ci) at  (5,-2) {$c_i$};
\node[vertex] (cj) at  (7,-2) {$c_j$};

\draw[edge] (a) to (u);



\tikzset{edge/.style = {- ,> = latex'}}

\draw[edge] (a) to (ck);
\draw[edge] (u) to (ck);

\draw[edge] (a) to (ci);
\draw[edge] (u) to (ci);

\draw[edge] (a) to (cj);
\draw[edge] (u) to (cj);


\end{tikzpicture}
\end{center}

Let $S$ and $S^\prime$ denote the number of such arrangements in $X$ and $R_t(X)$, respectively, for $X \sim X(n,c/n)$. Clearly, $S^\prime \leq S$. From Markov's inequality we get $Pr\{S\geq 1\}\leq O(1/n^2)$. Thus, $Pr\{S^\prime \geq 1\}\leq O(1/n^2)$. Hence, a.a.s. $\{\overline{B_i}\}$ never happens.

Similar argument combined with lemma \ref{geometriclemma1} gives that $Pr\{Y_i\in \{2, \cdots , n\}\} \leq O(1/n^2)$.

By collecting all the terms we have the following inequality.

\begin{flalign*}
\E[Y_i]
&=\Sigma_{j=0}^n j\cdot Pr\{Y_i=j\}\\
&\leq Pr\{Y_i=1\}+ n\cdot O(1/n^2)\\
&\leq Pr\{\{Y_i=1\} \cap Q_i\} + Pr\{\{Y_i=1\} \cap \overline{Q_i}\}+ n\cdot O(1/n^2)\\
&\leq Pr\{\{Y_i=1\} \cap Q_i\} + Pr\{\{Y_i=1\} \cap \overline{Q_i} \cap B_{i-1}\}+Pr\{\{Y_i=1\} \cap \overline{Q_i} \cap \overline{B_{i-1}}\} + n\cdot O(1/n^2)\\
&\leq \frac{c(1-\gammat)\gamma_{t+1}}{1-\gammat} + \frac{3(i-1)}{n(1-\gammat)} + O(1/n) + o_n(1)\\
&\leq c\gamma_{t+1}+  \frac{3(i-1)}{n(1-\gamma)} + o_n(1)\\
&\leq c\gamma+ \epsilon_i + o_n(1)\\
&\leq c\gamma+\frac{1}{4}(1-c\gamma)+o_n(1)\\
&\leq 1-\frac{3}{4}(1-c\gamma) + o_n(1)< 1+o_n(1),  \hspace{3pt}\text{by lemma \ref{drift1}.}
\end{flalign*}
\end{proof}

Note that for any fixed $i$, $\epsilon_i=o_n(1)$ and for $c > 1$ we have $c\gamma<1$.

Let $X_i$ be the number of dominated vertices at the end of the $i$th step of the second epoch. Then we have 

\[X_i=X_{i-1}-1+Y_{i}=X_0-n+\Sigma_{i=1}^nY_i\]

Untill the second epoch ends. If the second epoch stops at $i^\prime$th step then $X_j = 0 \quad \forall j >  i^\prime$.

 Thus we have

\begin{flalign*}
\E[X_0]
&\leq \E[\abs{f_0(R_{t}(X))}]-\E[\abs{f_0(R_{t+1}(X))}]\\
&= (1-\gamma_{t+1}-c\gammat+c\gammat^2)n - (1-\gamma_{t+2}-c\gamma_{t+1}+c\gamma_{t+1}^2)n
\end{flalign*}

and, 

\[\E[X_i]=\E[X_{i-1}]-1+Y_{i}=\E[X_0]-n+\Sigma_{i=1}^nY_i\]
as long as the second epoch continues.

Now let us define $\delta \equiv \delta(t) := (1-\gamma_{t+1}-c\gammat+c\gammat^2) - (1-\gamma_{t+2}-c\gamma_{t+1}+c\gamma_{t+1}^2) $ so that $\E[X_0]\leq n\delta$.

Now we present the main lemmas of this section. The first lemma below shows that with high probability, for any sufficiently small $\e>0$, we can choose a sufficiently large $t$, 
such that the number of vertices deleted in the second epoch is less than $\e n$. The proof is by modelling the number of remaining dominated vertices after $i$ steps of the epoch, 
as a biased random walk. 

\begin{lemma}
\label{endinglemma}
$\forall 0< \epsilon < \min\{\frac{1}{12}(1-\gamma)(1-c\gamma),\frac{5}{192}(1-\gamma)(1-c\gamma)^2\} \quad \exists T$ such that $\forall t>T$ a.a.s. at most $\epsilon n $ vertices will be deleted from $R_t(X)$ before algorithm reaches the core.
\end{lemma}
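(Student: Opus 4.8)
The plan is to model the second epoch as a random walk $X_i$ (the number of dominated vertices remaining after $i$ deletions) with a strictly negative drift, and to show that, provided the first epoch is run long enough that $X_0$ is only a tiny fraction of $n$, this walk is absorbed at $0$ within $\epsilon n$ steps. Recall the recursion $X_i = X_{i-1} - 1 + Y_i$. By Theorem~\ref{expectation} together with the concentration statement of Lemma~\ref{conc}, a.a.s.\ $X_0 \le \delta(t)\,n + o(n)$, where $\delta(t)$ is the per-phase drop defined at the end of Section~\ref{sec:concentration}; by the bounds of Section~\ref{sec:rate-conv} and Lemma~\ref{drift1} we have $c\gamma<1$, so $\delta(t)\to 0$ geometrically in $t$. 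Hence, given any target, we may fix $T$ so that for all $t>T$ the walk begins from $X_0 = o(n)$ with an arbitrarily small constant factor.

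First I would reduce to bounded increments. The tail estimate $\Pr\{Y_i \ge 2\} = O(1/n^2)$ obtained in the proof of Lemma~\ref{drift} (via Lemma~\ref{geometriclemma1}), applied with a union bound over the at most $\epsilon n = O(n)$ steps of the epoch, shows that a.a.s.\ every step satisfies $Y_i \in \{0,1\}$. On this event the walk is monotone non-increasing: at each step it either stays put ($Y_i=1$) or decreases by one ($Y_i=0$). Consequently $X_i$ reaches $0$ after exactly $X_0$ down-steps, so the number of vertices deleted in the epoch equals the number of steps required to accumulate $X_0$ down-steps.

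Next I would exploit the negative drift. Lemma~\ref{drift} gives, uniformly for $i \le \tfrac{1}{12}n(1-\gamma)(1-c\gamma)$, that $\E[Y_i] \le 1 - \tfrac34(1-c\gamma) + o_n(1)$, so on the bounded-increment event the conditional probability of a down-step is $\Pr\{Y_i=0\} = 1-\E[Y_i] \ge \mu - o_n(1)$, where $\mu := \tfrac34(1-c\gamma) > 0$. Thus the number of down-steps in the first $j$ steps stochastically dominates a sum of independent Bernoulli$(\mu - o_n(1))$ variables; writing the associated martingale and invoking an Azuma/Freedman-type bound (exactly the machinery of Section~\ref{sec:concentration}) shows that a.a.s.\ at least $\mu j - o(n)$ down-steps occur within the first $j \le \tfrac{1}{12}n(1-\gamma)(1-c\gamma)$ steps. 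Therefore, to collect the $X_0 \le \delta(t)\,n + o(n)$ down-steps needed for absorption it suffices to run $j \le \tfrac{X_0}{\mu}(1+o(1)) \le \tfrac{\delta(t)}{\mu}\,n + o(n)$ steps. Given $\epsilon$ in the stated range, choose $T$ large enough that $\delta(t) \le \mu\epsilon/2$ for $t>T$; then a.a.s.\ the epoch terminates within $\tfrac{\delta(t)}{\mu}n + o(n) \le \epsilon n$ deletions. The requirement $\epsilon < \tfrac{1}{12}(1-\gamma)(1-c\gamma)$ guarantees that all these steps lie in the regime where Lemma~\ref{drift} holds, so the drift bound used above is valid throughout; the sharper constraint $\epsilon < \tfrac{5}{192}(1-\gamma)(1-c\gamma)^2$ is what preserves a definite margin between the (gradually degrading) drift and the accumulated concentration and $o_n(1)$ errors over the full span of $\epsilon n$ steps.

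The main obstacle I expect is the self-referential, non-localized nature of the drift. The bound $\E[Y_i] \le 1 - \tfrac34(1-c\gamma)$ weakens as $i$ grows (the correction $\tfrac{3(i-1)}{n(1-\gamma)}$ in Lemma~\ref{drift}), so negativity of the drift is only certified while $i$ stays below $\tfrac{1}{12}n(1-\gamma)(1-c\gamma)$; moreover deleting a single vertex can in principle alter the domination status of far-away vertices, so a rare cascade could break monotonicity. Handling this cleanly requires the bounded-increment reduction (ruling out $Y_i \ge 2$ a.a.s.\ across the whole epoch) combined with the concentration apparatus of Section~\ref{sec:concentration} to pass from expectations to a.a.s.\ statements \emph{while} the walk is run; it is the careful accounting of these accumulated error terms against the shrinking drift that fixes the admissible range of $\epsilon$.
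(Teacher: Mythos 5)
Your proof is sound and reaches the stated conclusion, but its key concentration step takes a genuinely different route from the paper's. The paper never reduces to bounded increments: it introduces the auxiliary walk $Z_0=X_0$, $Z_i=Z_{i-1}-1+Y_i$ that keeps running past absorption, fixes $\rho=\epsilon n$, shows $\E[Z_\rho]\leq n\delta(t)-\tfrac{3}{4}\rho(1-c\gamma)<0$ via Lemma~\ref{drift} and the choice $\delta(t)<\tfrac{1}{8}\epsilon(1-c\gamma)$, and then proves $Z_\rho\leq 0$ a.a.s.\ by splitting $Z_\rho=Z_0+Z_\rho'$, controlling $Z_0$ with Lemma~\ref{conc} and $Z_\rho'$ with an exponential-moment (Chernoff) bound at parameter $\log n$ built from the tail bounds on the $Y_i$; the second constraint $\epsilon<\tfrac{5}{192}(1-\gamma)(1-c\gamma)^2$ is consumed precisely in that Chernoff step. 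You instead use $\Pr\{Y_i\geq 2\}=O(1/n^2)$ (which the paper also extracts from Lemma~\ref{geometriclemma1}) plus a union bound over the at most $n$ steps to force all increments into $\{0,1\}$ a.a.s., after which the walk is non-increasing, absorption amounts to collecting $X_0\leq \delta(t)n+o(n)$ down-steps, each occurring with probability at least $\tfrac{3}{4}(1-c\gamma)-o_n(1)$ by Lemma~\ref{drift}, and a Binomial/Azuma bound finishes. Your route is more elementary (no moment-generating-function computation), yields the stronger conclusion that only $O(\delta(t))n+o(n)\ll\epsilon n$ vertices are deleted, and --- contrary to your closing explanation --- never uses the constraint $\epsilon<\tfrac{5}{192}(1-\gamma)(1-c\gamma)^2$ at all: that constraint is an artifact of the paper's Chernoff computation, and since the lemma only quantifies over the smaller range, your proof still covers it. One shared caveat: your stochastic-domination/Azuma step needs the per-step bounds (the drift bound of Lemma~\ref{drift} and the $O(1/n^2)$ tail) to hold conditionally on the history of the second epoch, not merely marginally; the paper makes exactly the same implicit assumption when it factorizes the exponential moment of $\sum_i Y_i$ into per-step bounds, so this does not put you below the paper's own level of rigor.
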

\begin{proof}[Proof of Lemma~\ref{endinglemma}]
Choose $t$ such that $\delta \equiv \delta(t)<\frac{1}{8}\epsilon(1-c\gamma)$. This can be done because as $t$ increases $\gamma_{t}-\gamma_{t+1}$ and $\gamma_{t+1}-\gamma_{t+2}$ approaches zero.

Now suppose that the second epoch runs for $\rho=\epsilon n$ steps. We shall show that $\E[X_\rho]=0$ a.a.s., i.e., there is no more dominated vertex left to be deleted. To this end we define a sequence of new random variable $\{Z_i\}$ as follows: \[Z_0:=X_0\] and, \[Z_i:=Z_{i-1}-1+Y_{i}=Z_0-n+\Sigma_{i=1}^nY_i\]

Note that $X_i\leq Z_i$ and $Z_i\leq 0 \implies X_i=0$.

As $\rho=\epsilon n \leq \frac{1}{4}n(1-\gamma)(1-c\gamma)$, at the end of the $\rho $ steps number of dominated vertices remaining is

\begin{flalign*}
\E[X_\rho]\leq\E[Z_\rho]
&=\E[Z_0]-\Sigma_i^\rho1+\Sigma_i^\rho \E[Y_i]\\
&\leq n\delta -\frac{3}{4}\rho(1-c\gamma)  &\text{by lemma \ref{drift}}\\
&\leq \frac{1}{8}n\epsilon(1-c\gamma) -\frac{3}{4} n\epsilon(1-c\gamma)\\
&\leq -\frac{5}{96}(1-\gamma)(1-c\gamma)^2n\leq 0
\end{flalign*}
Hence $\E[X_\rho]=0$.

Now we shall proving the concentration. Let us define $l:= \frac{5}{96}(1-\gamma)(1-c\gamma)^2$. Next we proceed to show \[Pr\{Z_\rho>0\}\leq Pr\{Z_\rho-\E[Z_\rho]>l\cdot n\}<o_n(1)\]

Write $Z_\rho$ as $Z_\rho \equiv Z_0 + Z_\rho^\prime(Y_1,\cdots,Y_\rho)$.
First observe that, from lemma \ref{conc}, $Pr\{Z_0-\E[Z_0]>(l/2)n\} \leq Pr\{\abs{Z_0-\E[Z_0]}>(l/2)n\}\leq o_n(1)$ for any $s>0$.

From the main geometric lemma we get
\begin{flalign*}
\E[e^{t(Y_i-\E[Y_i])}]\leq \E[e^{tY_i}]
&\leq 1 + e^t + O(\frac{e^{2t}}{n^2}\cdot \frac{1-(e^t/n^3)^{n-1}}{1-e^t/n^3})
\end{flalign*}

Fix $t>0$. Then, 

\begin{flalign*}
Pr\{Z_\rho^\prime-\E[Z_\rho^\prime]>(l/2)\cdot n\}
&=Pr\{\Sigma_i^\rho Y_i - \Sigma_i^\rho \E[Y_i]>(l/2)\cdot n\}\\
&=Pr\{e^{t(\Sigma_i^\rho Y_i - \Sigma_i^\rho \E[Y_i])}>e^{tn(l/2)}\}\\
&\leq e^{t(\Sigma_i^\rho Y_i - \Sigma_i^\rho \E[Y_i])}/e^{tn(l/2)}\\
&\leq \frac{(1 + e^t + O(\frac{e^{2t}}{n^2}\cdot \frac{1-(e^t/n^3)^{n-1}}{1-e^t/n^3}))^\rho}{e^{tn(l/2)}}
\end{flalign*}

setting $t=\log(n)$ we get
\begin{flalign*}
Pr\{Z_\rho^\prime-\E[Z_\rho^\prime]>(l/2)\cdot n\}
&\leq \frac{(1 + n + O(\frac{1-(1/n^2)^{n-1}}{1-1/n^2}))^\rho}{n^{n(l/2)}}\\
&\leq \frac{(1 + n + O(\frac{1}{1-1/n^2}))^{\epsilon n}}{n^{n(l/2)}}\\
&\leq \frac{((O(1) + n)^\epsilon)^n}{(n^{(l/2)})^n}\\
&\leq (\frac{(O(1) + n)^\epsilon}{n^{(l/2)}})^n\leq o_n(1)
\end{flalign*}

As $\epsilon < l/2$ ,the quantity approaches zero as $n$ increases, thus the last inequality follows. So, \[Pr\{Z_\rho-\E[Z_\rho]>l\cdot n\}\leq Pr\{Z_0-\E[Z_0]>(l/2)\cdot n\} + Pr\{Z_\rho^\prime-\E[Z_\rho^\prime]>(l/2)\cdot n\}\leq o_n(1)\]

\end{proof}

The next lemma follows from properties of $\gamma_t$ and the concentration bounds presented in Section~\ref{sec:concentration}.

\begin{lemma}\label{approximationlemma}
$\forall 0< \delta  \quad \exists T$ such that $\forall t>T$ a.a.s.  \[(1-\gamma)(1-c\gamma) n + o(n) \leq \abs{f_0(R_{t}(X))} \leq (1-\gamma)(1-c\gamma)n + \delta n + o(n). \]
\end{lemma}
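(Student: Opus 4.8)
The plan is to synthesise three facts already established in the paper: the exact expectation formula of Theorem~\ref{expectation}, the geometric decay of the gap $\epsilon(t)$ from Section~\ref{sec:rate-conv}, and the concentration bound of Theorem~\ref{maintheorem}. Conceptually, the expected size of the complex after $t$ pruning phases decreases monotonically to the target value $(1-\gamma)(1-c\gamma)n$, and concentration then pins the actual size to within an $o(n)$ window around its expectation for each fixed $t$.

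First I would recall from Theorem~\ref{expectation} and the definition of $\epsilon(t)$ in Section~\ref{sec:rate-conv} that
\[ \E\pbrcx{\abs{f_0(R_t(X))}} = (1-\gamma)(1-c\gamma)n + \epsilon(t)\,n, \]
using the identity $1-\gamma-c\gamma+c\gamma^2 = (1-\gamma)(1-c\gamma)$. The two-sided bound
\[ \frac{(c-c\gamma)e^c}{1-ce^{-c}}(ce^{-c})^t \le \epsilon(t) \le \frac{(c+1)e^c}{1-c\gamma}(c\gamma)^t \]
proved there does all the work: its left-hand side is strictly positive (since $c(1-\gamma)e^c>0$ and $ce^{-c}<1$ for $c>1$), so $\epsilon(t)\ge 0$; and since both $ce^{-c}$ and $c\gamma$ lie in $(0,1)$ --- the latter by Lemma~\ref{drift1} --- both ends tend to $0$, so $\epsilon(t)\to 0$ as $t\to\infty$. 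Hence, given $\delta>0$, I would pick $T=T(\delta,c)$, a constant \emph{independent of $n$}, so that $0\le \epsilon(t)<\delta$ for every $t>T$; this sandwiches $\E\pbrcx{\abs{f_0(R_t(X))}}$ between $(1-\gamma)(1-c\gamma)n$ and $(1-\gamma)(1-c\gamma)n+\delta n$.

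Next I would fix any $t>T$ and apply Theorem~\ref{maintheorem}: choosing $s=s(n)\to\infty$ slowly enough that $s\,n^{2/3}=o(n)$ gives $\abs{\,\abs{f_0(R_t(X))}-\E\pbrcx{\abs{f_0(R_t(X))}}\,}=o(n)$ a.a.s. Combining with the expectation sandwich yields, a.a.s.,
\[ (1-\gamma)(1-c\gamma)n - o(n) \le \abs{f_0(R_t(X))} \le (1-\gamma)(1-c\gamma)n + \delta n + o(n), \]
which is precisely the claimed inequality.

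The one point that needs care --- and the step I expect to be the only genuine obstacle --- is the order of quantifiers. The concentration estimate of Theorem~\ref{maintheorem} weakens with $t$ through the $c4^{t+1}$ and $2^{t+1}$ terms in its exponent, so one cannot let $t$ grow with $n$ while still claiming the bound is $o_n(1)$. The argument must therefore first fix the constant $T$ using only $\epsilon(t)\to 0$, and only afterwards invoke concentration at a fixed $t>T$, where the $t$-dependent constants are harmless. Everything else is routine bookkeeping of the $o(n)$ error terms.
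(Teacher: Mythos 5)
Your proposal is correct and takes essentially the same route as the paper: both arguments combine the expectation formula of Theorem~\ref{expectation} with the concentration bound of Theorem~\ref{maintheorem} applied at a fixed $t$, after first choosing $T$ (independent of $n$) so that the deterministic gap between $\E\pbrcx{\abs{f_0(R_t(X))}}$ and $(1-\gamma)(1-c\gamma)n$ is below $\delta n$. The only cosmetic difference is that you certify $0\le\epsilon(t)\to 0$ by citing the geometric sandwich from Section~\ref{sec:rate-conv}, whereas the paper argues it via the monotonicity of $h(x)=1-e^{-c(1-x)}-cx(1-x)$ together with $\gamma_t\uparrow\gamma$; your explicit remark on the order of quantifiers (fix $t>T$ before letting $n\to\infty$, since the constants in Theorem~\ref{maintheorem} grow with $t$) makes precise a point the paper leaves implicit.
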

\begin{proof}[Proof of Lemma~\ref{approximationlemma}]

From theorem \ref{expectation} and theorem \ref{maintheorem} it can the shown that a.a.s. 
\[\abs{f_0(R_t(X))}= (1-\gamma_{t+1}-c\gammat+c\gammat^2)n + o(n) \]

Define $h(x):=1-e^{-c(1-x)}-cx(1-x)$ on the interval $(0,1)$. It can be checked that on this interval $h^\prime(x)<0$, i.e., $h(x)$ is strictly decreasing. Thus we have that $(1-\gamma_{t+2}-c\gamma_{t+1}+c\gamma_{t+1}^2) \leq (1-\gamma_{t+1}-c\gammat+c\gammat^2) $. Hence the left inequality follows. 

Note that $\{\gammat\}_t$ is a monotonically increasing sequence that converges to $\gamma$. Therefore, $\{ (1-\gamma_{t+1}-c\gammat+c\gammat^2)\}_t$ is a monotonically decreasing sequence that converges to $(1-\gamma-c\gamma+c\gamma^2)$. Thus, by choosing sufficiently large $T$, we can restrict  $\{ (1-\gamma_{t+1}-c\gammat+c\gammat^2)\}_{t>T}$ inside a $\delta$-ball round $(1-\gamma-c\gamma+c\gamma^2)$ for any $\delta>0$. Hence the right inequality follows.
\end{proof}

Using above two lemmas we have the proof of our first main result Theorem~\ref{sizeofcore} about the size of the core (after strong collapse) of a ER complex.
\begin{proof}[Proof of Theorem \ref{sizeofcore}]

By Lemma \ref{endinglemma} and Lemma \ref{approximationlemma}. 
\end{proof}

\section{End Range Phase Transition}
\label{sec:end-range-phase-trans}
Let $P(X)$ denote the number all possible dominated-dominating pairs. It can be shown that for $X \sim X(n,p)$, $\E[P(X)]=(n(n-1)/2)p(1-p(1-p))^{n-2}$. For $p=\frac{\lambda \log(n)}{n}$, where $\lambda > 1$, $\E[P(X)]=O(\frac{n\log(n)}{n^\lambda})=o_n(1)$. Thus, by Markov's inequality, a.a.s. there is no dominated vertex to start the collapsing procedure.

  Now we shall focus on the behavior of $X \sim X(n,p)$ where $p=1-\frac{\lambda \log(n)}{n}$. For $\lambda > 2$, $\E[P(X)]=O(\frac{n^2}{n^\lambda})=o_n(1)$. Thus a.a.s. there is no dominated vertex to start the collapsing procedure. But the situation is quite opposite when $\lambda < 1$ as the following lemma claims. 

\begin{lemma}
\label{l:end-ph-trans}
For $X \sim X(X, 1-\frac{\lambda \log n}{n})$ and $\lambda < 1$, a.a.s. $X$ is collapsible.
\end{lemma}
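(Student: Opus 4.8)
The plan is to show that a.a.s.\ the graph $G \sim G(n,p)$ with $p = 1 - \tfrac{\lambda \log n}{n}$ has a \emph{universal vertex}, i.e.\ a vertex $v$ adjacent to every other vertex, and then to argue that a single universal vertex already forces the clique complex $X=\cl(G)$ to strong collapse to a point. For the collapse step, note that if $v$ is universal then $N_G[v]$ is the whole vertex set, so for every $u\neq v$ we have $N_G[u]\subseteq N_G[v]$; by Lemma~\ref{nbd_contenment} (applied to the $0$-simplex $\sigma=\{u\}$) the vertex $u$ is dominated by $v$. Removing such a $u$ is an elementary strong collapse, and since deleting a vertex and its cofaces from a flag complex leaves exactly the cliques avoiding $u$, the resulting complex is $\cl(G-u)$, which is again a flag complex by Lemma~\ref{flag_property} and in which $v$ is still universal. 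Iterating, I would delete all $n-1$ vertices other than $v$, one at a time, leaving the single point $v$. Since the $0$-core is unique, this exhibits $X$ as strong collapsible, hence collapsible.

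It then remains to establish the existence of a universal vertex a.a.s. Writing $q=1-p=\tfrac{\lambda\log n}{n}$ and letting $W$ denote the number of universal vertices, a fixed vertex is universal exactly when its $n-1$ incident edges are all present, so $\E[W]=n(1-q)^{n-1}=(1+o(1))\,n^{1-\lambda}\to\infty$ because $\lambda<1$. To turn this divergent first moment into an a.a.s.\ existence statement I would apply the second moment method. For distinct $u,v$, the events ``$u$ universal'' and ``$v$ universal'' together require precisely the $2n-3$ edges incident to $u$ or $v$, so $\Pr\{u,v\text{ both universal}\}=p^{2n-3}$, giving $\E[W^2]=\E[W]+n(n-1)p^{2n-3}$ and hence $\tfrac{\E[W^2]}{\E[W]^2}=\tfrac{1}{\E[W]}+\tfrac{n(n-1)}{n^2}\cdot\tfrac1p\to 1$. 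By the Paley--Zygmund / Chebyshev inequality, $\Pr\{W>0\}\geq \E[W]^2/\E[W^2]\to 1$, so a.a.s.\ a universal vertex exists, completing the argument.

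The only mildly delicate ingredient is the second-moment estimate, but it collapses to a triviality in this regime: because $p\to 1$, the correlation factor $\tfrac1p$ between two universal-vertex events tends to $1$, so the $W$ indicators are asymptotically independent and $W$ concentrates the moment its mean diverges. Everything else is immediate: the reduction to the presence of one dominating vertex, and the termination of the strong collapse at a point, follow directly from Lemmas~\ref{nbd_contenment} and~\ref{flag_property} together with the uniqueness of the $0$-core. I would only take care to state the moment computation with natural logarithms, so that the threshold exponent reads $n^{1-\lambda}$ and the hypothesis $\lambda<1$ is exactly what makes $\E[W]$ diverge.
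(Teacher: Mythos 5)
Your proposal is correct and follows essentially the same route as the paper: both establish, via the second moment method (your Paley--Zygmund bound is the paper's Chebyshev computation in equivalent form), that a universal vertex exists a.a.s.\ when $\lambda<1$, with matching first- and second-moment estimates $\E[W]=np^{n-1}=\Theta(n^{1-\lambda})$ and $n(n-1)p^{2n-3}$. The only difference is that you spell out the deterministic step -- that a universal vertex dominates every other vertex (Lemma~\ref{nbd_contenment}) and so the flag complex strong collapses to a point (Lemma~\ref{flag_property}) -- which the paper leaves implicit; this is a welcome addition rather than a deviation.
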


\begin{proof}[Proof of Lemma~\ref{l:end-ph-trans}]
We shall show that, in this range, a.a.s. there exits a vertex adjacent to every other vertices. Let us define the random variable $V \in \{0, \cdots, n\}$ that counts the number vertices that are adjacent to all other vertices. Clearly $\E[V]=np^{n-1}=n(1-\frac{\lambda \log(n)}{n})^{n-1}=\Theta(\frac{n}{n^\lambda})$. Now we shall calculate $Var(V)$. Let $I_i$ denote the indicator random variable that $v_i$ is adjacent to all other vertices. Then, 

\begin{flalign*}
Var(V)
&=nVar(I_1)+n(n-1)cov(I_1,I_2)\\
&=np^{n-1}(1-p^{n-1}) + n(n-1)(p^{2n-3}-p^{2n-2})
\end{flalign*}

Thus, 
\begin{flalign*}
Pr\{V=0\}
&\leq \frac{Var(V)}{(\E[V])^2}\\
&\leq \frac{np^{n-1}(1-p^{n-1}) + n(n-1)(p^{2n-3}-p^{2n-2})}{n^2p^{2n-2}}\\
&\leq 1/(np^{n-1})+(1/p-1)\\
&\leq \Theta (\frac{n^\lambda}{n}) + \frac{\lambda \log(n)}{n- \lambda \log(n)}=o_n(1)
\end{flalign*}

Thus $V \geq 1$ a.a.s.

\end{proof}

	
\bibliographystyle{plainurl}	
\bibliography{mybib}

\end{document}